\documentclass[10pt]{article} 




   \usepackage[preprint, nonatbib]{neurips_2024}

\usepackage[utf8]{inputenc} 
\usepackage[T1]{fontenc}
\usepackage{hyperref}       
\usepackage{url}            
\usepackage{booktabs}       
\usepackage{amsfonts}       
\usepackage[numbers]{natbib}
\usepackage{nicefrac}       
\usepackage{microtype}      
\usepackage{amsfonts}
\usepackage{amsmath}
\usepackage{amsthm}
\usepackage{amssymb}
\usepackage{cases}
\usepackage[title]{appendix}
\usepackage{bm}
\usepackage{courier}
\usepackage{enumitem}
\usepackage{graphicx}
\usepackage{url}
\usepackage{caption}
\usepackage{subcaption}
\usepackage{wrapfig}
\usepackage{multirow}
\usepackage{multicol}
\usepackage{xcolor}    

\input{Definitions}

\usepackage{hyperref}
\usepackage{url}

\title{The Space Complexity of Approximating Logistic Loss}

\author{%
  Gregory Dexter \\
  LinkedIn Corporation\\
  \texttt{gdexter@linkedin.com} \\
  \And
  Petros Drineas \\
  Department of Computer Science \\
  Purdue University \\
  \texttt{pdrineas@purdue.edu} \\
  \AND
  Rajiv Khanna \\
  Department of Computer Science \\
  Purdue University \\
  \texttt{rajivak@purdue.edu}
}

\begin{document}

\maketitle

\renewcommand{\textcolor}[2]{#2}

\begin{abstract}
We provide space complexity lower bounds for data structures that approximate logistic loss up to $\epsilon$-relative error on a logistic regression problem with data $\mathbf{X} \in \mathbb{R}^{n \times d}$ and labels $\mathbf{y} \in \{-1,1\}^d$. The space complexity of existing coreset constructions depend on a natural complexity measure $\mu_\mathbf{y}(\mathbf{X})$, first defined in~\cite{munteanu2018coresets}. We give an $\tilde{\Omega}(\frac{d}{\epsilon^2})$ space complexity lower bound in the regime $\mu_\mathbf{y}(\mathbf{X}) = \mathcal{O}(1)$ that shows existing coresets are optimal in this regime up to lower order factors. We also prove a general $\tilde{\Omega}(d\cdot \mu_\mathbf{y}(\mathbf{X}))$ space lower bound when $\epsilon$ is constant, showing that the dependency on $\mu_\mathbf{y}(\mathbf{X})$ is not an artifact of mergeable coresets. Finally, we refute a prior conjecture that $\mu_\mathbf{y}(\mathbf{X})$ is hard to compute by providing an efficient linear programming formulation, and we empirically compare our algorithm to prior approximate methods.
\end{abstract}

\section{Introduction}\label{section:introduction}

Logistic regression is an indispensable tool in data analysis, dating back to at least the early 19th century. It was originally used to make predictions in social science and chemistry applications~\cite{PFV1838,PFV1845,Cramer2003}, but over the past 200 years it has been applied to all data-driven scientific domains, from economics and the social sciences to physics, medicine, and biology. At a high level, the (binary) logistic regression model is a statistical abstraction that models the probability of one of two alternatives or classes by expressing the log-odds (the logarithm of the odds) for the class as a linear combination of one or more predictor variables. 

Formally, logistic regression aims to find a parameter vector $\beta \in \R^d$ that minimizes the logistic loss, $\Lcal(\beta)$, which is defined as follows:
\begin{align}
    \Lcal(\beta)
    &= \sum_{i=1}^n \log(1 + e^{-\yb_i\xb_i^T\beta}),
\end{align}
where $\Xb \in \R^{n \times d}$ is the data matrix ($n$ points in $\R^d$, with $\xb_i^T$ being the rows of $\Xb$) and $\yb \in \{-1,1\}^n$ is the vector of labels whose entries are the $\yb_i$. Due to the central importance of logistic regression, algorithms and methods to improve the efficiency of minimizing the logistic loss are always of interest \cite{mai2021coresets, munteanu2018coresets, munteanu2024optimal}.

The prior study of linear regression, a much simpler problem that admits a closed-form solution, has provided ample guidance on how we may expect to improve the efficiency of logistic regression. Let us first consider how a data structure that approximates $\ell_2$-regression loss may be leveraged to design efficient algorithms for linear regression. Let $\Dcal(\cdot):\R^d \rightarrow \R$ be a data structure such that:
\begin{gather*}
    (1-\epsilon)\|\Ab\xb - \bb\|_2 \leq \Dcal(\xb) \leq (1+\epsilon) \|\Ab\xb - \bb\|_2.
\end{gather*}

Then, for, say, any $\epsilon \in (0,\nicefrac{1}{3})$ 
\begin{flalign*}
    \xbtil = \argmin_{\xb \in \R^d} \Dcal(\xb)
    \Rightarrow
    \|\Ab\xbtil - \bb\|_2 &\leq \frac{1+\epsilon}{1-\epsilon} \cdot \min_{\xb \in \R^d} \|\Ab\xb - \bb\|_2\\
    &\leq (1+3\epsilon) \cdot \min_{\xb \in \R^d} \|\Ab\xb - \bb\|_2.
\end{flalign*}

That is, a data structure that approximates the $\ell_2$-regression loss up to $\epsilon$-relative error may be used to solve the original regression problem up to $3\epsilon$-relative error. This is particularly interesting when $\Dcal(\cdot)$ has lower \textit{space} complexity than the original problem or can be minimized more efficiently.

Practically efficient data structures satisfying these criteria for linear regression have been instantiated through matrix sketching and leverage score sampling \cite{Woodruff2014}. There is extensive work exploring constructions of a matrix $\Sb \in \R^{s \times n}$, where given a data matrix $\Ab \in \R^{n \times d}$ and vector of labels $\bb \in \R^n$, we may solve the lower dimensional problem $\xbtil = \argmin_{\xb \in \R^d} \|\Sb\Ab\xb - \Sb\bb\|_2$ to achieve the guarantee $\|\Ab\xbtil - \bb\|_2 \leq \frac{1+\epsilon}{1-\epsilon} \cdot \min_{\xb\in\R^d}\|\Ab\xb-\bb\|_2$ for a chosen $\epsilon > 0$. Under conditions that guarantee that $s \ll n$, we can achieve significant \textit{computational time and space savings} by following such an approach. 
An important class of matrices $\Sb \in \R^{s \times n}$ that guarantee the above approximation are the so-called $\ell_2$-subspace embeddings which satisfy:
\begin{gather*}
    (1-\epsilon)\|\Ab\xb - \bb\|_2 \leq \|\Sb(\Ab\xb - \bb)\|_2 \leq (1+\epsilon) \|\Ab\xb - \bb\|_2 
    ~~~\text{for all }\xb \in \R^d.
\end{gather*}
Despite the central importance of logistic regression in statistics and machine learning, relatively little is known about how the method behaves when matrix sketching and sampling are applied to its input.~\citet{munteanu2018coresets,pmlr-v139-munteanu21a} initiated the study of \textit{coresets} for logistic regression. \textcolor{red}{Meanwhile, \citet{munteanu2024optimal} provide the current state-of-the-art bounds bounds on the size of a coreset for logistic regression.} Analogously to linear regression, these works present an efficient data structure $\Lcaltil(\cdot)$ such that 
\begin{gather} \label{eqn:relative_error_logistic}
    (1-\epsilon)\Lcal(\beta) \leq \Lcaltil(\beta) \leq (1+\epsilon) \Lcal(\beta).
\end{gather}
We call $\Lcaltil(\cdot)$ an \emph{$\epsilon$-relative error approximation} to the logistic loss. Prior work on coreset construction for logistic regression critically depends on the data complexity measure $\mu_\yb(\Xb)$, which was first introduced in \cite{munteanu2018coresets}, and is defined as follows.
\begin{definition}\label{def:complexity_measure} (Classification Complexity Measure \cite{munteanu2018coresets}) 
    For any $\Xb \in \R^{n \times d}$ and $\yb \in \{-1, 1\}^n$, let $$\mu_\yb(\Xb) = \sup_{\beta \neq \zero} \frac{\|(\Db_\yb\Xb\beta)^+\|_1}{\|(\Db_\yb\Xb\beta)^-\|_1},$$ where $\Db_\yb$ is a diagonal matrix with $\yb$ as its diagonal, and $(\Db_\yb\Xb\beta)^+$ and $(\Db_\yb\Xb\beta)^-$ denote the positive and the negative entries of $\Db_\yb\Xb\beta$ respectively.
\end{definition}

Specifically, these methods construct a coreset by storing a subset of the rows indexed by $\Scal \subset \{1\ldots n\}$ such that \textcolor{red}{$|\Scal| = \Ocaltil(\frac{d\cdot\mu_\yb(\Xb)}{\epsilon^2})$} and generating a set of weights $\{w_i\}_{i\in\Scal}$ such that each $w_i$ is specified by $\Ocal(\log nd)$ bits \cite{munteanu2024optimal}. The approximate logistic loss is then computed as:
\begin{gather}\label{eqn:coreset_def}
    \Lcaltil(\beta) = \sum_{i\in \Scal} w_i \cdot \log(1 + e^{-\yb_i\xb_i^T\beta})
\end{gather}
is an $\epsilon$-relative error approximation to $\Lcal(\beta)$. We see that $\mu_\yb(\Xb)$ is important in determining how compressible a logistic regression problem is through coresets, and prior has proven this dependency in coresets is necessary \cite{woodruff2023online}. Our work further shows this dependency is fundamental to the space complexity of approximating logistic loss by any data structure.

Our work advances understanding of data structures that approximate logistic loss to reduce its space and time complexity. Our results provide guidance on how existing coreset constructions may be improved upon as well as their fundamental limitations.

\subsection{Our contributions}\label{section:contributions}

We briefly summarize our contributions in this work; see Section~\ref{sxn:notation} for notation.

\begin{itemize}
    \item We prove that any data structure that approximates logistic loss up to $\epsilon$-relative error must use $\Omegatil(\frac{d}{\epsilon^2})$ space in the worst case on a dataset with constant $\mu$-complexity (Theorem \ref{thm:space_lower_bound}).
    \item We show that any coreset that provides an $\epsilon$-relative error approximation to logistic loss requires storing $\Omegatil(\frac{d}{\epsilon^2})$ rows of the original data matrix $\Xb$ (Corollary \ref{cor:coreset_lower_eps}). Thereby, we prove that analyses of existing coreset constructions are optimal up to logarithmic factors in the regime where $\mu_\yb(\Xb) = \Ocal(1)$. 
    \item We prove that \emph{any} data structure that approximates logistic loss to relative error must take $\Omegatil(d\cdot \mu_\yb(\Xb))$ space, thereby showing that the dependency on the $\mu$-complexity measure is not an artifact of using mergeable coresets which the prior work~\cite{woodruff2023online} had relied on (Theorem \ref{thm:logistic_lower_complexity}).
    \item We provide experiments that demonstrate how prior methods that only approximate $\mu_\yb(\Xb)$ can be substantially inaccurate, despite being more complicated to implement than our method (see Section \ref{section:experiments}).
    \item Finally, we prove that low rank approximations can provide a simple but weak additive error approximation to logistic loss and these guarantees are tight in the worst case (See Appendix \ref{sxn:low_rank}). 
\end{itemize}

\subsection{Related Work}\label{sxn:related}

Prior work has explored the space complexity of data structures that preserve $\|\Xb\beta\|_p$ for $\beta \in \R^d$, particularly in the important case where $p=2$. Lower bounds for this problem are analogous to our work and motivate our inquiry. An early example of such work is \cite{nelson2014lower}, which lower bounds the minimum dimension of an ``oblivious subspace embedding'', a particular type of data structure construction that preserves $\|\Xb\beta\|_2$. A more recent example in this line of work is \cite{li2021tight}, which provides space complexity lower bounds for general data structures that preserves $\|\Xb\beta\|_p$ for general $p \in \mathbb{N}$.

Recent work on the development of coresets for logistic regression motivates our focus on this problem. This line of work was initiated by~\citet{munteanu2018coresets}. \textcolor{red}{\citet{mai2021coresets} used Lewis weight sampling to achieve an $\Ocaltil(d\mu_\yb(\Xb)^2 \cdot \epsilon^{-2})$. The work of \citet{woodruff2023online} later provided an online coreset construction containing $\Ocaltil(d\mu_\yb(\Xb)^2 \cdot \epsilon^{-2})$ points as well as a coreset construction using $\Ocaltil(d^2 \mu_\yb(\Xb) \cdot \epsilon^{-2})$ points. Finally, \citet{munteanu2024optimal} recently proved an $\Ocaltil(d\mu_\yb(\Xb) \cdot \epsilon^{-2})$ size coreset construction.} Our work is complementary to the above works, since it highlights the limits of possible compression of the logistic regression problem while maintaining approximation guarantees to the original problem.

\subsection{Notation}\label{sxn:notation}

We assume, without loss of generality, that $\yb_i = -1$ for all $i=1\ldots n$. Any logistic regression problem with $(\Xb, \yb)$ defined above can be transformed to this standard form by multiplying both $\Xb$ and $\yb$ by the matrix $-\Db_\yb$. Here $\Db_\yb \in \mathbb{R}^{n\times n}$ is a diagonal matrix with $i$-th entry set as $\yb_i$. The logistic loss of the original problem is equal to that of the transformed problem for all $\beta\in \R^d$. Let $\Mb_i$ denote the $i$-th row of a matrix $\Mb$. We denote as $\Xb$ the matrix formed by stacking $\xb_i$ as rows. We will use $\Ocaltil(\cdot)$ and $\Omegatil(\cdot)$ to suppress logarithmic factors of $d$, $n$, $1/\epsilon$, and $\mu_\yb(\Xb)$. Finally, let $[n] = \{1,2,...,n\}$.

\section{Space complexity lower bounds}\label{sxn:logistic_loss}

In this section, we provide space complexity lower bounds for a data structure $\Lcaltil(\cdot)$ that satisfies the relative error guarantee in eqn.~\ref{eqn:relative_error_logistic}. We use the notations as specified in Section~\ref{section:introduction}. Additionally, we require throughout this section that the entries of $\Xb$ can be specified in $\Ocal(\log nd)$ bits.

Our first main result is a general lower bound on the space complexity of \emph{any} data structure which approximates logistic loss to $\epsilon$-relative error for every parameter vector $\beta \in \R^d$ on a data set whose complexity measure $\mu_\yb(\Xb)$ is bounded by a constant. As a corollary to this result, we show that existing coreset constructions are optimal up to lower order terms in the regime where $\mu_\yb(\Xb) = \Ocal(1)$. Our second main result shows that any data structure providing a $\epsilon_0$-constant factor approximation to the logistic loss requires $\Omegatil(\mu_\yb(\Xb))$ space, where $\epsilon_0>0$ is constant. Both of these results proceed by reduction to the \texttt{INDEX} problem~\citep{li2021tight, miltersen1995data} (see Section~\ref{sec:indexDefinition}), where we use the fact that an approximation to the logistic loss can approximate ReLU loss under appropriate conditions. 

Consider the ReLU loss:
\begin{gather}
    \Rcal(\beta; \Xb) = \sum_{i=1}^n \max\{\xb_i^T\beta, 0\}.
\end{gather}

Our lower bound reductions hinge on the fact that a relative error approximation to logistic loss can be used to simulate a relative error approximation to ReLU loss under appropriate conditions. We capture this in the following lemma. We include all proofs omitted from the main text in Appendix \ref{sxn:proofs}. 

\begin{lemma}\label{lemma:logistic_gives_relu}
    Given a data set $\Xb \in \R^{n \times d}$ and $\Bcal \subset \R^d$ such that $\inf_{\beta\in \Bcal} \Rcal(\beta; \Xb) > 1$, if there exists a data structure $\Lcaltil(\cdot)$ that satisfies:
    \begin{gather*}
        (1-\epsilon)\Lcal(\beta) \leq \Lcaltil(\beta) \leq (1+\epsilon)\Lcal(\beta)
        ~~\text{for all }\beta \in \Bcal,
    \end{gather*}
    then there exists a data structure taking $\Ocaltil(1)$ extra space such that:
    \begin{gather*}
        (1-3\epsilon)\Rcal(\beta) \leq \Rcaltil(\beta) \leq (1+3\epsilon)\Rcal(\beta)
        ~~\text{for all }\beta \in \Bcal.
    \end{gather*}
\end{lemma}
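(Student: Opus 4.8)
The plan is to exploit the pointwise closeness of the softplus $\log(1+e^z)$ to the ReLU $\max\{z,0\}$, together with a rescaling of the query point to wash out the constant additive gap between them. First I would record the elementary two-sided bound
\[
    \max\{z,0\} \le \log(1+e^z) \le \max\{z,0\} + \log 2 \qquad \text{for all } z \in \R,
\]
which follows by checking the sign of $z$ separately. Since we work in the standard form $\yb_i=-1$, the logistic loss is $\Lcal(\beta)=\sum_{i=1}^n \log(1+e^{\xb_i^T\beta})$, so summing the bound with $z = t\,\xb_i^T\beta$ (and using $\max\{tz,0\}=t\max\{z,0\}$ for $t>0$) yields the sandwich
\[
    t\,\Rcal(\beta) \le \Lcal(t\beta) \le t\,\Rcal(\beta) + n\log 2.
\]
This is the crux: a single logistic evaluation at the inflated point $t\beta$ recovers $t$ times the ReLU loss up to a fixed additive error $n\log 2$ that does not grow with $t$.

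Second, I would define the approximation $\Rcaltil(\beta) = \Lcaltil(t\beta)/t$ for a scale $t$ to be fixed, which only requires storing the scalar $t$ on top of $\Lcaltil$, i.e.\ $\Ocaltil(1)$ extra space. Feeding the relative-error guarantee of $\Lcaltil$ through the sandwich gives, on the lower side, $\Rcaltil(\beta) \ge (1-\epsilon)\Lcal(t\beta)/t \ge (1-\epsilon)\Rcal(\beta)$, and on the upper side $\Rcaltil(\beta) \le (1+\epsilon)\Rcal(\beta) + (1+\epsilon)\,n\log 2/t$. The lower bound already lies within the claimed tolerance. For the upper bound I would convert the leftover additive term into relative error using the hypothesis $\inf_{\beta\in\Bcal}\Rcal(\beta;\Xb) > 1$: since $\Rcal(\beta) > 1$, it suffices to pick $t$ so that $(1+\epsilon)\,n\log 2 / t \le 2\epsilon$, for instance $t = \lceil n\log 2 / \epsilon \rceil$, which gives $(1+\epsilon)n\log 2/t \le (1+\epsilon)\epsilon \le 2\epsilon$ for $\epsilon \le 1$ and hence $\Rcaltil(\beta) \le (1+3\epsilon)\Rcal(\beta)$. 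Crucially this $t$ is polynomially bounded in $n$ and $1/\epsilon$, so it is specified in $\Ocal(\log(n/\epsilon)) = \Ocaltil(1)$ bits, consistent with the extra-space budget.

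I expect the main technical point to be that the additive gap must stay bounded under rescaling, which it does precisely because the per-term error $\log 2$ is independent of the scale, so inflating by $t$ shrinks its relative contribution to $O(n/t)$ while the hypothesis $\Rcal(\beta)>1$ keeps the denominator away from zero. The one subtlety worth flagging is that the construction queries $\Lcaltil$ at the rescaled point $t\beta$ rather than at $\beta$, so the logistic guarantee must hold there; this is immediate whenever $\Bcal$ is closed under positive scaling, which is the case in the reductions where this lemma is applied, and otherwise one simply states the logistic hypothesis over $t\Bcal$. No other obstruction arises: $\Rcaltil$ is a deterministic post-processing of $\Lcaltil$, so correctness for every $\beta\in\Bcal$ follows termwise and the space overhead is a single stored scalar.
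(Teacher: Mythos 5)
Your proof is correct and takes essentially the same route as the paper's: both define $\Rcaltil(\beta) = \frac{1}{t}\Lcaltil(t\beta)$ for a large scale $t$ stored in $\Ocaltil(1)$ bits, bound the per-term gap between softplus and ReLU so that $\bigl|\frac{1}{t}\Lcal(t\beta) - \Rcal(\beta)\bigr| = O(n/t)$, and use the hypothesis $\Rcal(\beta) > 1$ to convert the additive error into relative error, landing at the $3\epsilon$ guarantee. The only cosmetic differences are that you use the one-sided sandwich $\max\{z,0\} \le \log(1+e^z) \le \max\{z,0\} + \log 2$ and chain inequalities directly, whereas the paper bounds the per-term gap by $1/(t\,e^{t|r|}) \le 1/t$ and combines via the triangle inequality; and your explicit flag that the logistic guarantee must hold at the rescaled point $t\beta$ is a subtlety the paper's statement shares but leaves implicit.
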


\subsection{Lower bounds for the $\mu_\yb(\Xb) = \Theta(1)$ regime}

We lower bound the space complexity of any data structure that approximates logistic loss to $\epsilon$-relative error. Recall that the running time of the computation compressing the data to a small number of bits and evaluating $\Lcaltil(\beta)$ for a given query $\beta$ is unbounded in this model. Hence, Theorem~\ref{thm:space_lower_bound} provides a strong lower bound on the space needed for \textit{any} compression of the data that can be used to compute an $\epsilon$-relative error approximation to the logistic loss, including, but not limited to, coresets. 

At a high level, our proof operates by showing that a relative error approximation to logistic loss can be used to obtain a relative error approximation to ReLu regression, which in turn can be used to construct a relative error $\ell_1$-subspace embedding. Previously,~\citet{li2021tight} lower bounded the worst case space complexity of any data structure that maintains an $\ell_1$-subspace embedding by reducing the problem to the well-known~\texttt{INDEX} problem in communication complexity. Notably, the construction of $\Xb$ has complexity measure bounded by a constant, i.e., $\mu_\yb(\Xb) = \Ocal(1)$.
\begin{lemma}\label{lemma:relu_lower}
    There exists a matrix $\Xb \in \R^{n \times d}$ such that $\mu_\yb(\Xb) \leq 4$ and any data structure that, with at least $2/3$ probability, approximates $\Rcal(\beta; \Xb)$ up to $\epsilon$-relative error requires $\Omegatil(\frac{d}{\epsilon^2})$ space, provided that $d = \Omega(\log 1/\epsilon)$, $n = \Omegatil(d/\epsilon^2)$, and $0 < \epsilon \leq \epsilon_0$ for some small constant $\epsilon_0$. 
    
    Furthermore, $\Rcal(\beta; \Xb) > 3\|\beta\|_1$ for all $\beta \in \R^d$.
\end{lemma}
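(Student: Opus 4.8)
The plan is to reduce the task of approximating the ReLU loss to that of maintaining an $\ell_1$-subspace embedding, for which \citet{li2021tight} already proved an $\Omegatil(d/\epsilon^2)$ space lower bound via a reduction from \texttt{INDEX}. The bridge is the elementary identity $\max\{t,0\}+\max\{-t,0\}=|t|$: on a \emph{symmetrized} data matrix the one-sided ReLU objective collapses into a genuine $\ell_1$ norm, so a data structure for ReLU loss is, verbatim, a data structure for the $\ell_1$ norm of the underlying matrix.

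Concretely, I would start from the hard instance $\Hb\in\R^{m\times d}$ of \citet{li2021tight}, where $m=\Omegatil(d/\epsilon^2)$ in the regime $d=\Omega(\log 1/\epsilon)$, $0<\epsilon\le\epsilon_0$: any data structure that with probability at least $2/3$ approximates $\|\Hb\beta\|_1$ to $\epsilon$-relative error for all $\beta$ must use $\Omegatil(d/\epsilon^2)$ space. Since $\|\Hb\cdot\|_1$ is a norm on $\R^d$, there is a constant $\delta>0$ with $\|\Hb\beta\|_1\ge\delta\|\beta\|_1$; rescaling $\Hb$ by $4/\delta$ leaves every \emph{relative}-error statement (hence the lower bound, up to storing the scale factor in $\Ocaltil(1)$ bits) unchanged, so we may assume $\|\Hb\beta\|_1>3\|\beta\|_1$ for all $\beta\neq\zero$. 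I then define $\Xb$ by stacking $\Hb$ on top of $-\Hb$, so that $\Xb$ has $n=2m=\Omegatil(d/\epsilon^2)$ rows. Pairing each row $\Hb_i$ with its negative gives $\max\{\Hb_i^T\beta,0\}+\max\{-\Hb_i^T\beta,0\}=|\Hb_i^T\beta|$, whence $\Rcal(\beta;\Xb)=\sum_i|\Hb_i^T\beta|=\|\Hb\beta\|_1$.

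With this construction all three claims fall out. The identity $\Rcal(\beta;\Xb)=\|\Hb\beta\|_1>3\|\beta\|_1$ is exactly the ``furthermore'' statement. Because the rows of $\Xb$ occur in $\pm$ pairs, the entries of $\Db_\yb\Xb\beta$ are symmetric about $0$, so its positive and negative parts carry equal $\ell_1$ mass and $\mu_\yb(\Xb)=1\le 4$. Finally, any $\Rcaltil$ that $\epsilon$-approximates $\Rcal(\beta;\Xb)$ is an $\epsilon$-relative-error data structure for $\|\Hb\beta\|_1$, so it inherits both the $\Omegatil(d/\epsilon^2)$ space lower bound and the $2/3$ success probability of \citet{li2021tight}.

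The step I expect to require the most care is reconciling the norm bound $\Rcal>3\|\beta\|_1$ with preserving hardness, in case the instance of \citet{li2021tight} is not full column rank or its constant $\delta$ is too small to scale cleanly. In that event I would append a symmetrized block $\pm c\Ib_d$ to $\Hb$, giving $\Rcal(\beta;\Xb)=\|\Hb\beta\|_1+c\|\beta\|_1$ and yielding the norm bound directly for any $c>3$. The hazard is that the additive term $c\|\beta\|_1$ shrinks the \emph{relative} gap between the two loss values the \texttt{INDEX} decoder must distinguish; I would neutralize it by scaling the hard part $\Hb$ up by a large constant relative to $c$, so that at every query point the hard signal dominates $c\|\beta\|_1$, and absorbing the resulting constant-factor loss into a redefinition of $\epsilon$, which does not alter the $\Omegatil(d/\epsilon^2)$ rate. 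Carefully tracking the negative-to-positive mass ratio through this augmented matrix is precisely what produces the stated bound $\mu_\yb(\Xb)\le 4$ rather than the value $1$ obtained from pure symmetrization.
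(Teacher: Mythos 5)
Your proposal is correct, and it reaches the same destination as the paper (reduction to the $\ell_1$-subspace sketch lower bound, Corollary 3.13 of \citet{li2021tight}) by a genuinely different reduction. The paper keeps the Li--Woodruff hard instance $\Ab$ unchanged and converts a ReLU data structure into an $\ell_1$ one: it writes $\Rcal(\beta)=\frac{1}{2}\one^T\Xb\beta+\frac{1}{2}\|\Xb\beta\|_1$, stores the column sums $\one^T\Xb$ exactly in $\Ocal(d)$ extra space, and recovers $\|\Xb\beta\|_1$ from $2\Rcaltil(\beta)-\one^T\Xb\beta$ at the cost of a factor-$2$ inflation in $\epsilon$; it then verifies $\mu_\yb(\Ab)\le 4$ and $\Rcal\ge 3\|\beta\|_1$ directly from the structure of the instance (all $2^k$ sign vectors, each row scaled by a weight $y_i\in[2\sqrt{k},8\sqrt{k}]$, so each row pairs with its negation up to a weight ratio of at most $4$). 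You instead modify the instance: stacking $\Hb$ on $-\Hb$ makes $\Rcal(\beta;\Xb)=\|\Hb\beta\|_1$ hold exactly, so the ReLU data structure \emph{is} the $\ell_1$ data structure with no side information, no error inflation, and the stronger conclusion $\mu_\yb(\Xb)=1$; the price is doubling $n$ (harmless under $\Omegatil$) and needing $\|\Hb\beta\|_1\ge 3\|\beta\|_1$, which you can only hedge about blindly. That hedge turns out to be unnecessary: the Li--Woodruff instance contains every sign pattern in $\{-1,1\}^k$ scaled by at least $2\sqrt{k}$, so it has full column rank and $\|\Hb\beta\|_1\ge 2\sqrt{k}\,\|\beta\|_1>3\|\beta\|_1$ for $k\ge 3$, and your rescaling/$\pm c\Ib_d$ fallback (the weakest, most hand-wavy part of your write-up, since the additive $c\|\beta\|_1$ term genuinely does perturb relative error near the kernel of $\Hb$) can be dropped entirely. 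Two small points worth noting if you flesh this out: any scale factor you do apply should be rational so the entries of $\Xb$ remain representable in $\Ocal(\log nd)$ bits, as the section requires; and at $\beta=\zero$ the strict inequality $\Rcal>3\|\beta\|_1$ degenerates to $0>0$ in your construction and the paper's alike, which is immaterial since the downstream use (Theorem \ref{thm:space_lower_bound}) only evaluates it on the unit $\ell_1$ sphere.
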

Using Lemma \ref{lemma:logistic_gives_relu}, we can extend this lower bound on the space complexity to approximate ReLU loss to logistic loss.
\begin{theorem}\label{thm:space_lower_bound}
    Any data structure $\Lcaltil(\cdot)$ that, with at least $2/3$ probability, is an $\epsilon$-relative error approximation to logistic loss for some input $(\Xb, \yb)$, requires $\Omegatil\left(\frac{d}{\epsilon^2}\right)$ space in the worst case, assuming that $d = \Omega(\log 1/\epsilon)$, $n = \Omegatil\left(d\epsilon^{-2}\right)$, and $0 < \epsilon \leq \epsilon_0$ for some sufficiently small constant $\epsilon_0$. 
\end{theorem}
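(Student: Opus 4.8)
The plan is to compose the two preceding lemmas: Lemma~\ref{lemma:relu_lower} supplies a hard instance $\Xb$ for ReLU approximation whose $\mu$-complexity is a constant, and Lemma~\ref{lemma:logistic_gives_relu} lets us simulate such a ReLU approximation using \emph{any} logistic-loss approximation at essentially no extra space cost. Concretely, I would argue by contraposition: suppose $\Lcaltil(\cdot)$ is an $\epsilon$-relative error approximation to $\Lcal$ that, with probability at least $2/3$, uses space $S$. I would instantiate the data matrix $\Xb$ (and labels, in the standard form with $\yb \equiv -1$) to be exactly the matrix guaranteed by Lemma~\ref{lemma:relu_lower}, so that $\mu_\yb(\Xb) \leq 4 = \Ocal(1)$ and $\Rcal(\beta; \Xb) > 3\|\beta\|_1$ for every $\beta \in \R^d$.

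The one point that needs care is the hypothesis $\inf_{\beta \in \Bcal} \Rcal(\beta; \Xb) > 1$ required by Lemma~\ref{lemma:logistic_gives_relu}: this fails on all of $\R^d$ because $\Rcal$ vanishes at the origin. I would therefore restrict attention to the domain $\Bcal = \{\beta \in \R^d : \|\beta\|_1 \geq 1\}$. Since the logistic approximation holds for every $\beta \in \R^d$, it holds in particular on $\Bcal$, and on $\Bcal$ we have $\Rcal(\beta;\Xb) > 3\|\beta\|_1 \geq 3 > 1$, so the hypothesis is satisfied. Applying Lemma~\ref{lemma:logistic_gives_relu} then yields a data structure $\Rcaltil$ that is a $3\epsilon$-relative error approximation to $\Rcal(\cdot; \Xb)$ on $\Bcal$ using only $S + \Ocaltil(1)$ space.

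It remains to transfer the ReLU lower bound of Lemma~\ref{lemma:relu_lower} to the restricted domain $\Bcal$. This is where the key structural property enters: $\Rcal(\cdot;\Xb)$ is positively homogeneous of degree one, and relative error is invariant under positive scaling. Hence the adversarial query vectors used in the \texttt{INDEX} reduction behind Lemma~\ref{lemma:relu_lower} may, without loss of generality, be rescaled to have $\ell_1$-norm at least $1$ (i.e.\ to lie in $\Bcal$) without changing either the relative-error guarantee or the reduction. Consequently a $3\epsilon$-relative error approximation valid merely on $\Bcal$ already suffices to solve the same \texttt{INDEX} instance, so it requires $\Omegatil(d/(3\epsilon)^2) = \Omegatil(d/\epsilon^2)$ space. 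Combining the two steps gives $S + \Ocaltil(1) = \Omegatil(d/\epsilon^2)$, whence $S = \Omegatil(d/\epsilon^2)$; the parameter constraints $d = \Omega(\log 1/\epsilon)$, $n = \Omegatil(d\epsilon^{-2})$, and $0 < \epsilon \leq \epsilon_0$ are inherited directly from Lemma~\ref{lemma:relu_lower}, with $\epsilon_0$ shrunk by a constant factor to absorb the $3$ appearing in the error.

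I expect the main obstacle to be precisely this domain/homogeneity reconciliation: one must confirm that the hardness in Lemma~\ref{lemma:relu_lower} genuinely localizes to the query vectors and that those vectors can be placed inside $\Bcal$, rather than requiring correctness on a neighborhood of the origin where $\Rcal$ is too small for Lemma~\ref{lemma:logistic_gives_relu} to apply. Everything else — the additive $\Ocaltil(1)$ space, the constant-factor passage from $\epsilon$ to $3\epsilon$, and the bound $\mu_\yb(\Xb) \leq 4$ — is absorbed harmlessly into the $\Omegatil$ and $\Ocal$ notation.
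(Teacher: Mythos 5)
Your proposal is correct and follows essentially the same route as the paper: instantiate the hard instance from Lemma~\ref{lemma:relu_lower}, restrict to a set bounded away from the origin so that the hypothesis $\inf_{\beta\in\Bcal}\Rcal(\beta;\Xb)>1$ of Lemma~\ref{lemma:logistic_gives_relu} holds, and use positive homogeneity of $\Rcal$ together with scale-invariance of relative error to transfer the lower bound. The only cosmetic difference is where homogeneity is applied --- the paper uses it to extend the derived ReLU approximation from the unit $\ell_1$-sphere to all of $\R^d$ (so Lemma~\ref{lemma:relu_lower} applies as a black box), whereas you use it to rescale the adversarial queries into $\Bcal$; these are the same argument.
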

\begin{proof}
By Lemma \ref{lemma:relu_lower}, there exists a matrix $\Xb$ such that any data structure that approximates the ReLU loss up to $\epsilon$-relative error requires the stated space complexity. Let
\begin{flalign*}
\Bcal = \{\beta \in \R^d ~|~ \|\beta\|_1 = 1\}
\end{flalign*}
Then, by Lemma \ref{lemma:relu_lower}, $\inf_{\beta \in \Bcal} \Rcal(\beta; \Xb) \geq 3$. Therefore, by Lemma \ref{lemma:logistic_gives_relu}, any $(1+\epsilon)$ factor approximation to the logistic loss for the matrix $\Xb$ provides a $(1+3\epsilon)$-factor approximation to $\Rcal(\beta; \Xb)$ for $\beta \in \Bcal$. Since $\Rcal(\beta) =\|\beta\|_1 \cdot \Rcal(\nicefrac{\beta}{\|\beta\|_1})$, we can extend this guarantee to all $\beta \in \R^d$. By Lemma \ref{lemma:relu_lower}, any data structure that provides such a guarantee requires the stated space complexity, and, finally, $\Lcaltil(\cdot)$ requires the stated complexity. 
\end{proof}
From the above theorem, we can derive a lower bound on the number of rows needed by a coreset that achieves an $\epsilon$-relative error approximation to the logistic loss (see eqn.~\ref{eqn:coreset_def} for specification of a coreset).
\begin{corollary}\label{cor:coreset_lower_eps}
    Any coreset construction that, with at least $2/3$ probability, satisfies the relative error guarantee in eqn.~\ref{eqn:relative_error_logistic} must store $\Omegatil(\frac{d}{\epsilon^2})$ rows of some input matrix $\Xb$, where $\mu_\yb(\Xb) = \Ocal(1)$.
\end{corollary}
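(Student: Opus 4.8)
The plan is to treat a coreset as a particular instantiation of the data structure $\Lcaltil(\cdot)$ appearing in Theorem~\ref{thm:space_lower_bound}, and then to convert that theorem's bit-level space bound into a bound on the number of stored rows. First I would fix $\Xb$ to be the worst-case instance witnessing Theorem~\ref{thm:space_lower_bound}, namely the matrix of Lemma~\ref{lemma:relu_lower}; this instance already satisfies $\mu_\yb(\Xb) \leq 4 = \Ocal(1)$, so it is admissible for the claimed regime. Any coreset of the form in eqn.~\ref{eqn:coreset_def} that, with probability at least $2/3$, achieves the relative-error guarantee of eqn.~\ref{eqn:relative_error_logistic} on this $\Xb$ is in particular a valid data structure $\Lcaltil(\cdot)$, so Theorem~\ref{thm:space_lower_bound} forces it to occupy $\Omegatil(d/\epsilon^2)$ bits.

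The second step is to upper bound the coreset's space as a function of its size $k = |\Scal|$. A coreset records $k$ rows together with $k$ weights, each weight specified in $\Ocal(\log nd)$ bits. The crucial point --- and the step I expect to be the main obstacle --- is that on this particular hard instance each retained row can itself be described in only $\Ocaltil(1)$ bits. This does \emph{not} hold for an arbitrary dense matrix, where a single row would cost $\Ocaltil(d)$ bits and would yield only the weaker conclusion $k = \Omegatil(\epsilon^{-2})$; it relies instead on the structured, low-information-per-row form of the \texttt{INDEX}-based construction inherited from~\citet{li2021tight} through Lemma~\ref{lemma:relu_lower} (where the number of rows itself is only $n = \Omegatil(d/\epsilon^2)$). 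I would therefore inspect that construction and confirm that a single row is pinned down by $\Ocaltil(1)$ bits (e.g.\ an index together with a bounded-precision scale), so that the entire coreset is encodable in $\Ocaltil(k)$ bits.

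Given that accounting, the conclusion is immediate: the coreset is a data structure using $\Ocaltil(k)$ bits, while Theorem~\ref{thm:space_lower_bound} says it must use $\Omegatil(d/\epsilon^2)$ bits; hence $\Ocaltil(k) \geq \Omegatil(d/\epsilon^2)$, and therefore $k = \Omegatil(d/\epsilon^2)$ once the logarithmic conversion factors are absorbed into $\Omegatil(\cdot)$. Since the same $\Xb$ has $\mu_\yb(\Xb) = \Ocal(1)$, this matches the $\Ocaltil(d\,\mu_\yb(\Xb)/\epsilon^2)$ row count of existing constructions up to logarithmic factors in that regime, which is exactly the optimality claim. The only piece needing genuine care beyond this skeleton is making the ``$\Ocaltil(1)$ bits per row'' property of the hard instance fully rigorous; everything else is a direct invocation of the theorem together with routine bookkeeping.
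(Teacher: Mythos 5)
Your proposal is correct in substance and shares the paper's core mechanism---an \texttt{INDEX}-based bit lower bound converted into a row bound by showing that rows of the hard instance are cheap to encode---but it organizes the argument differently, and the difference is instructive. The paper never needs to verify anything about the internal structure of the full-dimensional hard instance behind Theorem~\ref{thm:space_lower_bound}: it works with the \emph{small} instance ($\Xb \in \R^{n\times \Ocal(\log n)}$), where a row costs $\Ocal(\log n\cdot\log nd)=\Ocaltil(1)$ bits simply because it has few columns, obtains the $\Omegatil(1/\epsilon^2)$ row bound there, and then rebuilds the $d$ factor itself by stacking $d$ independent copies into a block-diagonal $\Yb$ and running a $d$-fold \texttt{INDEX} argument; the row count then adds up block by block. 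Your route instead invokes Theorem~\ref{thm:space_lower_bound} as a black box at full dimension $d$ and pays the structural cost on the encoding side: you must confirm that each row of the full-dimensional instance of Lemma~\ref{lemma:relu_lower} is describable in $\Ocaltil(1)$ bits. That confirmation does go through, for the reason you parenthetically suggest: in the \texttt{INDEX} reduction of \citet{li2021tight}, the unweighted ``template'' matrix (the sign patterns and block structure) is fixed and independent of Alice's string---only the bounded-precision per-row scalings carry information---so a stored row can be transmitted as a row index ($\log n = \Ocaltil(1)$ bits, since $n = \Omegatil(d/\epsilon^2)$ suffices for the construction) plus its scale ($\Ocal(\log nd)$ bits), with no appeal to sparsity at all. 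This makes your step rigorous and arguably gives a slightly more general encoding argument (any coreset of a ``public template times secret bounded-precision scalings'' instance is $\Ocaltil(|\Scal|)$ bits). What the paper's organization buys is self-containment: its corollary proof depends only on facts stated in its own lemmas, whereas yours leans on a property of \citet{li2021tight}'s construction that the lemma statements do not expose; what yours buys is modularity, since Theorem~\ref{thm:space_lower_bound} is used purely as a black box and the block-diagonal stacking need not be repeated.
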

\begin{proof}
    Using the previous theorem, there exists a matrix $\Xb \in \R^{n \times \Ocal(\log n)}$ such that, assuming that $n = \Omegatil(\epsilon^{-2})$ and $\epsilon$ is sufficiently small, any data structure that approximates the logistic loss up to relative error on $\Xb$ must use $\Omegatil(\nicefrac{1}{\epsilon^2})$ bits in the worst case. (Recall that $\Omegatil(\cdot)$ suppresses $\log n$ factors.) 

    If the data structure stores entire rows of $\Xb$ while storing a total of $\Omegatil(\frac{1}{\epsilon^2})$ bits, then it must store at least 
\begin{flalign*}
\Omegatil\left(\frac{1}{\epsilon^2} \cdot \frac{1}{\log n}\right) = \Omegatil\left(\frac{1}{\epsilon^2}\right)
\end{flalign*}
rows of $\Xb$ in total.

    Recall that the proof of Theorem \ref{thm:space_lower_bound} proceeds by showing that a relative error approximation to the logistic loss can be used to solve the \texttt{INDEX} problem. If we have $d$ independent instances of the matrix $\Xb$, i.e., $\Xb_{(1)}, \Xb_{(2)},...\Xb_{(d)}$, then we may create the matrix
    \begin{gather*}
        \Yb = 
        \begin{bmatrix}
            \Xb_{(1)} & \zero & \hdots & \zero \\
            \zero & \Xb_{(2)} & \ddots & \zero \\
            \vdots & \ddots & \ddots & \zero \\
            \zero & \zero & \cdot & \Xb_{(d)} 
        \end{bmatrix}.
    \end{gather*}
    Note that any relative error approximation to the logistic loss on $\Yb$ would allow relative error approximation to the logistic loss on each of the individual $\Xb_{(i)},\ i=1\ldots d$ matrices, thus allowing one to solve $d$ instances of the~\texttt{INDEX} problem simultaneously. This implies that we can query any bit in each of the $d$ index problems which solves an~$\texttt{INDEX}$ problem of size $\Omegatil\left(\nicefrac{d}{\epsilon^2}\right)$.

    If the data structure is restricted to store entire rows of $\Xb_{(i)}$, then recall that we must store $\Omegatil(\nicefrac{1}{\epsilon^2})$ rows of $\Xb_{(i)}$. Therefore, we conclude that any coreset that achieves a relative error approximation to the logistic loss on $\Yb$ with at least $2/3$ probability must store $\Omegatil(\nicefrac{d}{\epsilon^2})$ rows of $\Yb$.
\end{proof}
\textcolor{red}{The work of~\citet{munteanu2024optimal} showed that sampling $\Ocaltil\left(\frac{d \cdot \mu_\yb(\Xb)}{\epsilon^2}\right)$ rows of $\Xb$ yields an $\epsilon$-relative error coreset for logistic loss with high probability.} Hence, Corollary \ref{cor:coreset_lower_eps} implies that the coreset construction of \citet{mai2021coresets} is of optimal size in the regime where $\mu_\yb(\Xb) = \Ocal(1)$. However, Theorem \ref{thm:space_lower_bound} only guarantees that coresets are optimal up to a $d$ factor in terms of bit complexity. An interesting future direction would be closing this gap by either proving that coresets have optimal bit complexity or showing approaches, like matrix sparsification, could achieve even greater space savings.

\subsection{An $\Omegatil(\mu_\yb(\Xb)\cdot d)$ lower bound}

In this section, we provide a space complexity lower bound for a data structure achieving a constant $\epsilon_0$-relative error approximation to logistic loss on an input $\Xb$ with variable $\mu_\yb(\Xb)$. We again assume $\yb_i = -1$ for all $i \in [n]$ without loss of generality.

The proof depends on the existence of a matrix $\Mb \in \{-1,1\}^{n \times \log^4 n}$ that has nearly orthogonal rows. From $\Mb$, we can construct the matrix $\Xb$ such that $\mu_\yb(\Xb) = \Ocal(n)$ and a $2$-factor approximation to ReLU loss on $\Xb$ can solve the size $n$ \texttt{INDEX} problem. By Lemma \ref{lemma:logistic_gives_relu} and the lower space complexity bound for solving the \texttt{INDEX} problem, we prove the described lower bound for approximating logistic loss. 

We begin by proving the existence of the matrix $\Mb$. Recall that for any matrix $\Mb$, we use $\Mb_i$ to denote the $i$-th row of $\Mb$.
\begin{lemma}\label{lemma:approx_orthogonal_matrix}
    Let $n=2^p$ for $n\in \mathbb{N}$. There exists a matrix $\Mb \in \{-1, 1\}^{n \times k}$ such that $k = \log^4 n$ and $|\langle \Mb_i, \Mb_j \rangle| \leq 4 \log^2 n $ for all $i \neq j$.
\end{lemma}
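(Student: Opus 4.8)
The natural route is the probabilistic method. I would take $\Mb \in \{-1,1\}^{n \times k}$ with $k = \log^4 n$ and draw all of its entries independently and uniformly from $\{-1,1\}$, then show that with strictly positive probability \emph{every} pair of distinct rows has inner product at most $4\log^2 n$ in absolute value. This avoids committing to an explicit algebraic construction and reduces the entire claim to a concentration-plus-union-bound estimate.

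First I would fix two distinct rows $i \neq j$ and study $\inner{\Mb_i}{\Mb_j} = \sum_{\ell=1}^{k} \Mb_{i\ell}\Mb_{j\ell}$. Because the entries are independent Rademacher variables, each product $\Mb_{i\ell}\Mb_{j\ell}$ is again a mean-zero Rademacher variable, and the $k$ products are mutually independent since they involve disjoint entries of $\Mb$; thus the inner product is a sum of $k$ independent terms bounded in $[-1,1]$, i.e.\ a length-$k$ symmetric random walk with variance proxy $k = \log^4 n$. A Hoeffding (Chernoff) bound then gives, for any threshold $t$, $\pr{|\inner{\Mb_i}{\Mb_j}| \ge t} \le 2\exp\!\left(-t^2/(2k)\right)$. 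Note that the target threshold $t = 4\log^2 n = 4\sqrt{k}$ sits only a \emph{constant} multiple of the standard deviation out, so the plain per-pair tail here is a constant, $2e^{-8}$.

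Finally I would union-bound over the $\binom{n}{2} < n^2$ unordered pairs: if the per-pair failure probability is $o(n^{-2})$, then a random $\Mb$ meets the bound for all pairs simultaneously with probability $1-o(1)$, proving existence. The main obstacle — and the only genuinely delicate step — is matching the polylogarithmic exponents: the union bound forces the Chernoff exponent to satisfy $t^2/(2k) \gtrsim 2\ln n$, i.e.\ an admissible threshold of order $\sqrt{k\log n}$, so one must verify that the generous ambient dimension $k=\log^4 n$ places the target correlation bound of order $\log^2 n$ inside this admissible window. I would concentrate the calculation precisely here, since this tradeoff is exactly what dictates the relationship between $k$ and the stated bound $4\log^2 n$; everything else is routine. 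As an alternative to the random argument, for a fully explicit $\Mb$ one could subsample columns of an $n\times n$ Hadamard matrix (whose full rows are exactly orthogonal) or invoke a bounded-correlation code whose pairwise inner products are controlled by character-sum (Weil-type) estimates, but the probabilistic construction is the cleanest for a pure existence statement.
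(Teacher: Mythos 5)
Your proposal takes exactly the same route as the paper's proof (random $\pm 1$ matrix, Hoeffding for each pair, union bound over pairs), but the step you defer --- ``verify that the target correlation bound of order $\log^2 n$ lies inside the admissible window'' --- is precisely where the argument fails, and it cannot be repaired. With $k=\log^4 n$ the target threshold $4\log^2 n$ equals $4\sqrt{k}$, i.e.\ a \emph{constant} number of standard deviations, so, as you yourself compute, each pair violates the bound with constant probability about $2e^{-8}$; over $\binom{n}{2}$ pairs the union bound gives roughly $n^2 e^{-8}\gg 1$, which is vacuous. Running your own requirement $t^2/(2k)\gtrsim 2\ln n$ in the other direction, the smallest threshold this method can certify is $t=\Theta(\sqrt{k\log n})=\Theta(\log^{2.5} n)$, which strictly exceeds $4\log^2 n$ for all large $n$. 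Worse, this is not a weakness of the probabilistic method that a Hadamard or character-sum construction could evade: for $N$ unit vectors with pairwise inner products at most $\epsilon$ in absolute value (with $1/\sqrt{N}\leq\epsilon\leq 1/2$), Alon's rank bound for perturbed identity matrices forces the ambient dimension to be at least $c\log N/(\epsilon^2\log(1/\epsilon))$; applied with $\epsilon=4\log^2 n/k=4/\sqrt{k}$, this caps the number of rows at $k^{O(1)}=\operatorname{polylog}(n)$, so the statement with threshold $4\log^2 n$ is in fact \emph{false} for all sufficiently large $n$.

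You should know that the paper's own proof commits exactly the error you flagged: it sets $t=4\sqrt{k}\cdot\log n$, which does make $2n^2\exp(-t^2/(2k))=2n^2\exp(-8\log^2 n)<1$, but with $k=\log^4 n$ this $t$ equals $4\log^3 n$, not the $4\log^2 n$ recorded in the lemma statement; the exponent is silently dropped when the conclusion is stated. So what both you and the paper can legitimately prove is the lemma with $4\log^3 n$ (equivalently $4\sqrt{k}\log n$) in place of $4\log^2 n$, and your instinct to isolate this verification rather than wave it through is the correct one. The weaker threshold is also all that is needed downstream: Lemma~\ref{lemma:constant_complexity_relu_lower} only requires the off-diagonal inner products to be asymptotically negligible compared to $\|\Mb_i\|_2^2=\log^4(n/2)$, so replacing $4\log^2 n$ by $4\log^3 n$ in the query vector $\beta$, in the case analysis, and in the hypothesis relating $\log^4(n/2)$ to the cross terms preserves that reduction for large $n$. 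A correct writeup should therefore prove and state the $4\sqrt{k}\log n$ version and propagate that bound.
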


We now use the previous lemma to construct a matrix $\Xb$ such that a $2$-factor approximation to ReLU loss on $\Xb$ requires $\Omegatil(d \cdot \mu_\yb(\Xb))$ space.

\begin{lemma}\label{lemma:constant_complexity_relu_lower}
    Let $n=2^p$ for $n\in \mathbb{N}$ and assume that $\log^4 (n/2) > 16 \log^2 n$. Then, there exists a matrix $\Xb \in \R^{n \times k}$ such that $k = \Ocal(\log^4 n)$ and $\mu_\yb(\Xb) = \Ocal(n)$ such that any data structure $\Rcaltil(\cdot)$ that, with at least $2/3$ probability, satisfies (\textcolor{blue}{for a fixed} $\beta\in\R^d$)
    \begin{gather*}
        \Rcal(\beta) \leq \Rcaltil(\beta) \leq 2\Rcal(\beta)
    \end{gather*}
    and requires at least $\Omega(n)$ bits of space.
\end{lemma}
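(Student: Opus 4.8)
The plan is to reduce from the \texttt{INDEX} problem, whose one-way randomized communication complexity on $N$ bits is $\Omega(N)$, using the near-orthogonal sign matrix $\Mb$ from Lemma~\ref{lemma:approx_orthogonal_matrix}. Writing $k_0 = \log^4 n$ for the number of columns of $\Mb$ (which satisfies $k_0 > 16\log^2 n$, since $\log^4 n \ge \log^4(n/2) > 16\log^2 n$ by hypothesis, and $|\langle\Mb_i,\Mb_j\rangle|\le 4\log^2 n$ for $i\neq j$), I would build $\Xb$ with $k_0+1$ columns by appending a single coordinate and encoding one bit of Alice's string in each bit-carrying row: set $\Xb_i = [\,\Mb_i,\ -c_i\,]$, where $c_i = 1$ if $a_i = 0$ and $c_i = 2$ if $a_i = 1$. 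Since $\yb = -\one$ gives $\Db_\yb = -\Ib$, the complexity measure is $\mu_\yb(\Xb) = \sup_{\beta\neq\zero} \|(\Db_\yb\Xb\beta)^+\|_1/\|(\Db_\yb\Xb\beta)^-\|_1$, i.e. the ratio of the negative mass to the positive mass of $\Xb\beta$, while $\Rcal(\beta;\Xb)$ is exactly the positive mass $\|(\Xb\beta)^+\|_1$.

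To read off bit $a_j$, Bob queries $\beta_j = [\,\Mb_j,\ t\,]$ with $t = \tfrac{2}{5}k_0$, so that $(\Xb\beta_j)_i = \langle \Mb_i, \Mb_j\rangle - c_i t$. For $i=j$ this equals $k_0 - c_j t$, which is positive for both values of $c_j$; for $i \ne j$ near-orthogonality gives $(\Xb\beta_j)_i \le 4\log^2 n - t < 0$, so the ReLU zeroes \emph{every} cross term. The hypothesis $\log^4(n/2) > 16\log^2 n$ is precisely what furnishes the separation $4\log^2 n < t < \tfrac12 k_0$ that makes this cancellation possible. Consequently $\Rcal(\beta_j)$ equals $k_0 - t = \tfrac35 k_0$ when $a_j=0$ and $k_0 - 2t = \tfrac15 k_0$ when $a_j = 1$, a ratio of $3$, so any $2$-approximation $\Rcaltil(\beta_j)$ lands in disjoint intervals and recovers $a_j$ with probability $\ge 2/3$.

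The delicate point is controlling $\mu_\yb(\Xb)$. The bit-rows alone force $\mu_\yb(\Xb)=\infty$: the direction $\beta = [\,\zero,\ 1\,]$ makes $\Xb\beta = (-c_1,\dots,-c_n)$ entirely negative, so the positive mass vanishes. I would therefore adjoin $2(k_0+1)$ \emph{balancing} rows $\{\pm\delta\,\eb_\ell\}_{\ell=1}^{k_0+1}$ for a small constant $\delta$ (say $\delta=\tfrac{1}{10}$). These guarantee positive mass at least $\delta\|\beta\|_1$ in every direction, while each bit-row contributes absolute value at most $\|\beta\|_1 + 2|\beta_{k_0+1}| \le 2\|\beta\|_1$; hence the negative mass is at most $(2n+\delta)\|\beta\|_1$ and $\mu_\yb(\Xb) \le (2n+\delta)/\delta = \Ocal(n)$. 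The same direction $[\,\zero,1\,]$ now certifies $\mu_\yb(\Xb) = \Omega(n)$, giving $\mu_\yb(\Xb) = \Theta(n)$. Because $\delta$ is constant, the balancing rows add only $\delta\|\beta_j\|_1 = \delta(k_0+t)$ to each query, which (using $\delta<\tfrac17$) is small enough that the factor-$3$ gap above survives as a gap strictly larger than $2$; and since only $\Ocal(\log^4 n)$ rows are spent on balancing, the remaining $n-\Ocal(\log^4 n) = \Theta(n)$ rows encode $\Theta(n)$ independent bits.

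To assemble the reduction, Alice converts her $\Theta(n)$-bit string into $\Xb$, runs the hypothetical data structure, and transmits it; Bob decodes his index with the single query $\beta_j$, so the $\Omega(n)$ bound for \texttt{INDEX} forces $\Omega(n)$ bits of space. I expect the main obstacle to be exactly the tension isolated above: zeroing the cross terms requires the appended coordinate to drive $\Xb\beta$ uniformly negative (sending $\mu_\yb(\Xb)$ to infinity), whereas bounding $\mu_\yb(\Xb)$ demands positive mass in every direction, which if supplied too generously both pushes $\mu_\yb(\Xb)$ below $\Theta(n)$ and inflates the query baseline until the multiplicative gap drops below $2$. Threading all three requirements at once — cross-term cancellation, $\mu_\yb(\Xb)=\Theta(n)$, and a surviving factor-$2$ gap — relies on the slack $4\log^2 n \ll k_0$ guaranteed by the hypothesis, and carefully verifying these inequalities is the technical heart of the argument.
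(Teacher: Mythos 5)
Your proposal is correct, and it follows the same high-level strategy as the paper: reduce from \texttt{INDEX}, use the near-orthogonal sign matrix of Lemma~\ref{lemma:approx_orthogonal_matrix}, query with the probed row plus a coefficient on one appended coordinate so that near-orthogonality drives every cross term below zero (where the ReLU kills it), and decode the bit from the disjoint intervals that any $2$-approximation must land in. However, your two key gadgets genuinely differ from the paper's. For the bit encoding, the paper rescales the row itself ($\Mbtil_i = \Mb_i$ if $y_i=1$, $\tfrac12 \Mb_i$ otherwise), so the signal is $\|\Mbtil_i\|_2^2 \in \{k'/4,\, k'\}$, whereas you put the bit in the magnitude of the appended column ($-c_i$ with $c_i\in\{1,2\}$), so the signal is $k_0 - c_j t \in \{\tfrac15 k_0, \tfrac35 k_0\}$. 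For the $\mu$-control, which is the crux, the paper appends a $(-\mu)$-scaled copy of \emph{every} row with $\mu = \frac{\log^2 n}{2^6 n}$: positive and negative entries of $\Xb\beta$ are then paired entrywise, and the inequality $\frac{\sum_r a_r}{\sum_r b_r} \le \max_r \frac{a_r}{b_r}$ gives $\mu_\yb(\Xb) \le 1/\mu = \Ocal(n)$ for every direction at once, at the price of per-query noise $\le 8n\mu\log^2 n = \Theta(\log^4 n)$, which is what forces the $n$-dependent tuning of $\mu$. You instead add $2(k_0+1)$ balancing rows $\pm\delta\,\eb_\ell$ with an absolute constant $\delta < \tfrac17$, guaranteeing mass $\ge \delta\|\beta\|_1$ of both signs in every direction; the value $\mu_\yb(\Xb)=\Theta(n)$ then emerges from the bit rows themselves (all-negative appended column) rather than from the gadget. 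Your route buys a cleaner decoupling: the gadget parameter is a constant independent of $n$, only $\Ocal(\log^4 n)$ rows are spent on it instead of doubling the matrix, it certifies $\mu_\yb(\Xb)=\Theta(n)$ rather than only $\Ocal(n)$, and it encodes $n - \Ocal(\log^4 n)$ bits rather than $n/2$. The paper's pairing trick buys an entrywise ratio bound that holds independently of any query analysis, keeping the noise rows structurally aligned with the data rows. I verified your arithmetic — the cross-term cancellation $4\log^2 n < t = \tfrac25 k_0$ follows from the hypothesis, the decoding thresholds $\tfrac{1+7\delta}{5}k_0$ versus $\tfrac{3+7\delta}{5}k_0$ are separated by more than a factor of $2$ exactly when $\delta < \tfrac17$, and $\mu_\yb(\Xb) \le (2n+\delta)/\delta$ — so the proposal stands as a valid alternative proof.
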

\begin{proof}

Our proof will proceed by reduction to the \texttt{INDEX} problem. Let $y_i \in \{0, 1\}$ for all $i=1\ldots \nicefrac{n}{2}]$ represent an arbitrary sequence of $n/2$ bits. We will show how to encode the state of the $n$ bits in a relative error approximation to ReLU loss on some data set $\Xb$.

First, let us start with the matrix $\Mb \in \{-1,1\}^{n/2 \times k'}$ specified in Lemma \ref{lemma:approx_orthogonal_matrix}, where $k' = \log^4 (n/2)$. Let $\Mbtil \in \R^{n/2 \times k'}$ such that $\Mbtil_{i*} = \Mbtil_{i*}$ if $y_i = 1$ and $\Mbtil_{i*} = \nicefrac{1}{2} \cdot \Mb_{i*}$ otherwise. In words,  we multiply the $i$-th row of $\Mb$ by one if $y_i = 1$ and $\nicefrac{1}{2}$ if $y_i = 0$. Next, let us construct the matrix $\Xb \in \R^{n \times k'+1}$:
\begin{gather}\label{eqn:mu_relu_construction}
    \Xb = 
    \begin{bmatrix}
        \Mbtil & \one \\
        -\mu \cdot \Mbtil & -\mu \cdot \one
    \end{bmatrix},
\end{gather}
where $\mu > 0$ will be specified later. 

Suppose we want to query $y_i$. Let $\beta = [\Mbtil_i, -4\log^2 n]^T$. We will show that $\Mbtil_j\beta < \mu\cdot 8 \log^2 n$ for all $j \neq i$ by considering three cases:

\textbf{Case 1 ($j \leq n/2;~j\neq i)$:} In this case, $\Mbtil_j\beta = \langle\Mbtil_i, \Mbtil_j\rangle - 4 \log^2 n$. By Lemma \ref{lemma:approx_orthogonal_matrix}, $\langle \Mbtil_i, \Mbtil_j\rangle \leq \langle \Mb_i, \Mb_j\rangle < 4\log^2 (n/2)$, hence we can conclude this case.

\textbf{Case 2 ($j > n/2;~j\neq 2i)$:} Here, $\Xb_j\beta = -\mu \langle \Mbtil_i, \Mbtil_j \rangle + 4\mu \log^2 n$. Since $|\langle \Mbtil_i, \Mbtil_j \rangle| \leq |\langle \Mb_i, \Mb_j \rangle| < 4\log^2 n$, $\Xb_j\beta < \mu \cdot 8\log^2 n$, so we conclude the case.

\textbf{Case 3 ($j = 2i$):} In this case, $\Xb_j\beta = -\mu\langle \Mbtil_i, \Mbtil_i\rangle + 4\mu\log^2 n$. Since $-\mu\langle \Mbtil_i, \Mbtil_i\rangle$ is negative, we conclude the case.

The above cases show that $\Xb_j\beta < \mu \cdot 8 \log^2 n$ when $j \neq i$. Therefore, 
\begin{flalign*}
\Rcal(\beta) \leq n \cdot \mu \cdot 8 \log^2 n + \relu(\Xb_i\beta)\quad \text{and}\quad\Rcal(\beta) \geq \relu(\Xb_i\beta).   
\end{flalign*}
We next show that the bit $y_i$ will have a large effect on $\Rcal(\beta)$. If $y_i = 0$, then,
\begin{gather*}
    \Xb_i\beta = \langle\Mbtil_i, \Mbtil_i\rangle = \frac{1}{4}\|\Mb_i\|_2^2 - 4 \log^2 n <  \frac{1}{4} \cdot \log^4 (n/2),
\end{gather*}
since $\Mbtil_i \in \R^{\log^4 (n/2)}$. On the other hand, if $y_i = 1$, then,
\begin{gather*}
    \Xb_i\beta = \|\Mbtil_i\|_2^2 - 4 \log^2 n
    = \log^4 (n/2) - 4 \log^2 n 
    > \frac{3}{4} \log^4 (n/2),
\end{gather*}
where we used our assumption that $\log^4 (n/2) > 16 \log^2 n$. Therefore, if $y_i = 0$, $\Rcal(\beta) \leq \frac{1}{4} \cdot \log^4 n + n \cdot 8 \mu \log^2 n$. By setting $\mu = \frac{\log^2 n}{2^6 \cdot n}$, we find that $\Rcal(\beta) < \frac{3}{8} \log^4 (n/2)$. On the other hand, if $y_i = 1$, then $\Rcal(\beta) > \frac{3}{4}\log^4 n/2$. Therefore, a $2$-factor approximation to $\Rcal(\beta)$ is able to decide if $y_i$ equals zero or one. By reduction to the \texttt{INDEX} problem, this implies that any $2$-factor approximation to $\Rcal(\beta)$ must take at least $\Omega(n)$ space (see Theorem \ref{thm:index_space_lower}).

Now we must just prove that $\mu_\yb(\Xb) = \Ocal(n)$. We will use the following inequality \cite{SpeilmanInequality}. For any two length $n$ sequences of positive numbers, $a_r,\ r=1\ldots n$ and $b_r,\ r=1\ldots n$,
\begin{gather*}
    \frac{\sum_{r=1}^n a_r}{\sum_{r=1}^n b_r} \leq \max_{r=1\ldots n} \frac{a_r}{b_r},
\end{gather*}
where the maximum is taken over an arbitrary fixed ordering of the sequences. Let us define these two sequences as follows for a fixed $\beta \in \R^d$. For $i \in 1\ldots \nicefrac{n}{2}$, if $\Xb_i\beta > 0$, let $a_i = \Xb_i\beta$ and $b_i = -1 \cdot \Xb_{2i}\beta$. If $\Xb_i\beta < 0$, let $a_i = \Xb_{2i}\beta$ and $b_i = -1 \cdot \Xb_{i}\beta$. We can disregard the case where $\Xb_i\beta = 0$, since this will not affect the sums of the sequences. Given such sequences, we get:
\begin{gather*}
    \frac{\|(\Xb\beta)^+\|_1}{\|(\Xb\beta)^-\|_1}
    = \frac{\sum_r a_r}{\sum_r b_r} 
    \leq \max_r \frac{a_r}{b_r} \leq \frac{1}{\mu}.
\end{gather*}
The last inequality follows since $\Xb_{2i}\beta = -\mu \cdot \Xb_i\beta$ for $i =1\ldots \nicefrac{n}{2}$. Hence, we conclude that $\mu_\yb(\Xb) = \mu^{-1} = \Ocal(n)$.
\end{proof}

The above theorem proves that a constant factor approximation to $\Rcal(\cdot)$ requires $\Omega(\mu_\yb(\Xb))$ space. We now extend this result to logistic loss.

\begin{theorem}\label{thm:logistic_lower_complexity}
    Let $n \geq n_0$ (for some constant $n_0 \in \mathbb{N}$) be a positive integer. There exists a global constant $\epsilon_0 > 0$ and a matrix $\Xb \in \R^{n \times k}$ such that any data structure $\Lcaltil(\cdot)$ that, with at least $2/3$ probability, satisfies (\textcolor{blue}{for a fixed} $\beta\in\R^d$):
    \begin{gather*}
        (1-\epsilon_0)\Lcal(\beta; \Xb) \leq \Lcaltil(\beta; \Xb) \leq (1+\epsilon_0)\Lcal(\beta; \Xb)
    \end{gather*}
    and requires at least $\Omegatil(d\cdot \mu_\yb(\Xb))$ bits of space. 
\end{theorem}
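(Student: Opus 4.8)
The plan is to establish the bound by the same two-move strategy used for Theorem~\ref{thm:space_lower_bound}, but anchored on the \emph{variable}-$\mu$ ReLU hard instance of Lemma~\ref{lemma:constant_complexity_relu_lower} rather than the constant-$\mu$ instance of Lemma~\ref{lemma:relu_lower}. Let $\Xb\in\R^{n\times k}$ be the matrix that Lemma~\ref{lemma:constant_complexity_relu_lower} produces, so that $k=\Theta(\log^4 n)$ and $\mu_\yb(\Xb)=\Theta(n/\log^2 n)$, and let $\beta=[\Mbtil_i,-4\log^2 n]^{\ts}$ be the fixed query point associated with decoding the bit $y_i$ in the embedded \texttt{INDEX} reduction, at which any $2$-factor approximation of $\Rcal(\beta;\Xb)$ provably requires $\Omega(n)$ bits.

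First I would transfer an assumed logistic data structure to a ReLU one via Lemma~\ref{lemma:logistic_gives_relu} applied with the singleton $\Bcal=\{\beta\}$. This needs the hypothesis $\Rcal(\beta;\Xb)>1$, which holds for all $n\geq n_0$: when $y_i=1$ the proof of Lemma~\ref{lemma:constant_complexity_relu_lower} gives $\Rcal(\beta)>\tfrac{3}{4}\log^4(n/2)$, and when $y_i=0$ the single term $\relu(\Xb_i\beta)=\tfrac{1}{4}\log^4(n/2)-4\log^2 n$ already lower-bounds $\Rcal(\beta)$ and exceeds $1$ by the standing assumption $\log^4(n/2)>16\log^2 n$. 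Hence any $\epsilon_0$-relative-error logistic structure $\Lcaltil$ for $\Xb$ produces, using only $\Ocaltil(1)$ additional bits, a structure $\Rcaltil$ satisfying $(1-3\epsilon_0)\Rcal(\beta)\leq\Rcaltil(\beta)\leq(1+3\epsilon_0)\Rcal(\beta)$.

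Next I would reconcile this two-sided bound with the one-sided $2$-factor form that Lemma~\ref{lemma:constant_complexity_relu_lower} demands. Choosing $\epsilon_0$ to be a small global constant with $\epsilon_0<1/9$ and rescaling to $\Rcaltil'(\beta)=\Rcaltil(\beta)/(1-3\epsilon_0)$ gives $\Rcal(\beta)\leq\Rcaltil'(\beta)\leq 2\Rcal(\beta)$, because $\tfrac{1+3\epsilon_0}{1-3\epsilon_0}<2$ for such $\epsilon_0$; this rescaling is a free deterministic post-processing step. Lemma~\ref{lemma:constant_complexity_relu_lower} therefore forces $\Rcaltil'$ to occupy $\Omega(n)$ bits, and after subtracting the $\Ocaltil(1)$ overhead the structure $\Lcaltil$ itself must use $\Omega(n)$ bits.

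The final move is bookkeeping to rewrite $\Omega(n)$ in the advertised form. Since $d=k=\Theta(\log^4 n)$ and $\mu_\yb(\Xb)=\Theta(n/\log^2 n)$, we have $d\cdot\mu_\yb(\Xb)=\Theta(n\log^2 n)=\tilde\Theta(n)$, so the extra $\log^2 n$ is swallowed by $\Omegatil(\cdot)$ and $\Omega(n)=\Omegatil(d\cdot\mu_\yb(\Xb))$. I do not expect a genuine obstacle here, as Lemmas~\ref{lemma:approx_orthogonal_matrix} and~\ref{lemma:constant_complexity_relu_lower} already carry the analytic burden; the only points demanding care are checking $\Rcal(\beta)>1$ in both \texttt{INDEX} regimes and ensuring the fixed $\epsilon_0$ survives the factor-$3$ blow-up of the transfer while remaining below the threshold set by the factor-$2$ separation between the $y_i=0$ and $y_i=1$ cases.
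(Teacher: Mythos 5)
Your first two moves are exactly the paper's: transfer the logistic structure to a ReLU structure via Lemma~\ref{lemma:logistic_gives_relu} (after checking $\Rcal(\beta)>1$ on the query set, which you do correctly in both bit regimes), then fix $\epsilon_0$ by the rescaling $\tfrac{1+3\epsilon_0}{1-3\epsilon_0}\leq 2$ and invoke Lemma~\ref{lemma:constant_complexity_relu_lower} to get $\Omega(n)$ bits. Up to the conclusion $\Omegatil(n)=\Omegatil(\mu_\yb(\Xb))$, your argument is sound and matches the paper's proof.

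The gap is in your final ``bookkeeping'' step, which is precisely where the paper does additional work. You obtain the factor $d$ only because, for this single hard instance, $d=k=\Theta(\log^4 n)$ is polylogarithmic in $n$, so $d\cdot\mu_\yb(\Xb)=\Theta(n\log^2 n)$ and the discrepancy with your $\Omega(n)$ bound is absorbed by $\Omegatil(\cdot)$. This makes the claimed $\Omegatil(d\cdot\mu_\yb(\Xb))$ bound vacuous as a statement about $d$: by the same token you could write $\Omegatil(d^{10}\cdot\mu_\yb(\Xb))$, since any power of $d$ is still polylog in $n$. Your instance therefore establishes only an $\Omegatil(\mu_\yb(\Xb))$ lower bound, with no genuine multiplicative dependence on the dimension. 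The paper closes this by the tensoring argument from Corollary~\ref{cor:coreset_lower_eps}: take $d$ independent copies $\Xb_{(1)},\ldots,\Xb_{(d)}$ of the hard instance and form the block-diagonal matrix $\Yb$. Each block encodes $n/2$ independent \texttt{INDEX} bits, so approximating logistic loss on $\Yb$ solves an \texttt{INDEX} instance of size $\Omega(dn)$ and requires $\Omega(dn)$ bits, while the column count now scales linearly in the free parameter $d$ and $\mu_\yb(\Yb)$ stays $\Ocal(n/\log^2 n)$ (for block-diagonal $\Yb$ the ratio $\|(\Yb\beta)^+\|_1/\|(\Yb\beta)^-\|_1$ is a mediant of the per-block ratios, hence bounded by their maximum). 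That construction is what makes the bound a true product $d\times\mu_\yb$, with $d$ and $\mu_\yb$ independently tunable; your proposal, as written, omits this step and with it the actual content of the $d$ factor.
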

\begin{proof}
The space complexity lower bound holds even if $\Lcaltil(\cdot)$ approximates $\Rcaltil(\cdot)$ only on the values of $\beta$ used to query the data structure in the proof of Lemma \ref{lemma:constant_complexity_relu_lower}. Define this set as 
\begin{flalign*}
\Bcal = \{[\Mbtil_i, -4\log^2 n]^T\in\R^{k'} ~|~ i=1\ldots \nicefrac{n}{2}\},   
\end{flalign*}
where $\Mbtil$ is used to construct $\Xb$ in eqn.~\ref{eqn:mu_relu_construction}. Since $\Rcal(\beta) \geq \Xb_i\beta$, we get
\begin{gather*}
    \Rcal(\beta) \geq \|\Mbtil_i\|_2^2 - 4 \log^2 n
    \geq \frac{\log^4 (n/2)}{4} - 4 \log^2 n.
\end{gather*}
Therefore, $\Rcal(\beta) \geq 1$ for all $n \geq n_0$, where $n_0$ is some constant in $\mathbb{N}$. By Lemma \ref{lemma:logistic_gives_relu}, the space complexity of a data structure that achieves
\begin{gather*}
        (1-\epsilon)\Lcal(\beta) \leq \Lcaltil(\beta) \leq (1+\epsilon)\Lcal(\beta)
    \end{gather*}
\textcolor{blue}{for a fixed} $\beta \in \Bcal$ must be at least the space complexity of a data structure achieving
\begin{gather*}
    \Rcal(\beta) \leq \Rcaltil(\beta) \leq \frac{(1+3\epsilon)}{(1-3\epsilon)}\Rcal(\beta)
\end{gather*}
\textcolor{blue}{for} $\beta \in \Bcal$. We can now solve for $\epsilon$ by setting $\nicefrac{(1+3\epsilon)}{(1-3\epsilon)} = 2$. Therefore, from Lemma \ref{lemma:constant_complexity_relu_lower}, we conclude that there exists a constant $\epsilon_0 > 0$ such that any data structure providing an $\epsilon_0$-relative approximation to the logistic loss requires at least $\Omegatil(n) = \Omegatil(\mu_\yb(\Xb))$ space. 

Finally, applying the argument used in Corollary~\ref{cor:coreset_lower_eps} of constructing a matrix $\Yb$, we achieve an $\Omegatil(d \cdot \mu_\yb(\Xb))$ lower bound.
\end{proof}

While prior work has show that mergeable coresets must include $\Omega(d \cdot \mu_\yb(\Xb))$ points to attain a constant factor guarantee to logistic loss \cite{woodruff2023online}, our lower bound result holds for general data structures \textcolor{blue}{and applies even for data structure providing the weaker ``for-each'' guarantee, where the guarantee must hold for an arbitrary but fixed $\beta \in \R^d$ with a specified probability.} \textcolor{red}{The proof of the lower bound in \cite{woodruff2023online} relies on constructing a matrix $\Ab$ that encodes $n$ bits such that the $i$-th bit can be recovered by adding some points to $\Ab$ and performing logistic regression on the new matrix. Hence, a mergeable coreset that compresses $\Ab$ can be used to solve the \texttt{INDEX} problem of size $n$. Meanwhile, our proof does not require constructing a regression problem but rather allows recovering the $i$-th bit by only observing an approximate value of the ReLU loss at a single fixed input vector for a fixed matrix $\Ab$. In addition to an arguably simpler proof, our approach needs weaker assumptions on the data structure.} Therefore, our lower bound applies in more general settings, i.e., when sparsification is applied to $\Xb$.

 \section{Computing the complexity measure $\mu_\yb(\Xb)$ in polynomial time}\label{sxn:computingmu}

We present an efficient algorithm to compute the data complexity measure $\mu_\yb(\Xb)$ of Definition~\ref{def:complexity_measure} on real data sets, \textit{refuting an earlier conjecture that this measure was hard to compute}~\cite{munteanu2018coresets}. The importance of this measure for logistic regression has been well-documented in prior work and further understanding its behavior on real-world data sets would help guide further improvements to coreset construction for logistic regression.

Prior work conjectured that $\mu_\yb(\Xb)$ was hard to compute and presented a polynomial time algorithm to approximate the measure to within a $\operatorname{poly}(d)$-factor (see Theorem 3 of \citet{munteanu2018coresets}). We refute this conjecture by showing that the complexity measure $\mu_\yb(\Xb)$ can in fact be computed efficiently via linear programming, as shown in the following theorem. \textcolor{red}{The specific LP formulation for computing a vector $\beta^*$ such that $\mu_\yb(\Xb) = \frac{\|(\Db_\yb\Xb\beta^*)^-\|_1}{\|(\Db_\yb\Xb\beta^*)^+\|_1}$ is given in eqn.~(\ref{eqn:lp_formulation}) in the Appendix.}
\begin{theorem}\label{thm:compute_complexity_measure}
    If the complexity measure $\mu_\yb(\Xb)$ of Definition~\ref{def:complexity_measure} is finite, it can be computed \textit{exactly} by solving a linear program with $2n$ variables and $4n$ constraints.
\end{theorem}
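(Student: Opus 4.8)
The plan is to rewrite the scale-invariant supremum defining $\mu_\yb(\Xb)$ as the maximization of a \emph{linear} objective, by normalizing one of the two $\ell_1$ terms and exploiting the identity $\|\zb^+\|_1 - \|\zb^-\|_1 = \one^\ts\zb$. Writing $\Ab = \Db_\yb\Xb$, so that $\mu_\yb(\Xb) = \sup_{\beta \neq \zero} \|(\Ab\beta)^+\|_1 / \|(\Ab\beta)^-\|_1$, I first note the ratio is invariant under $\beta \mapsto c\beta$ for $c>0$, so I may restrict to $\beta$ with $\|(\Ab\beta)^-\|_1 = 1$. On this set the numerator equals $\|(\Ab\beta)^+\|_1 = 1 + \one^\ts\Ab\beta$, so the ratio is the affine quantity $1 + \one^\ts\Ab\beta$, giving $\mu_\yb(\Xb) = 1 + \sup\{\,\one^\ts\Ab\beta : \|(\Ab\beta)^-\|_1 = 1\,\}$. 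This converts a quotient of piecewise-linear convex functions into a single linear objective.

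Next I would relax the normalizing equality to the inequality $\|(\Ab\beta)^-\|_1 \le 1$ and linearize the negative-part norm with auxiliary variables $\sbb \in \R^n$, yielding the program
\begin{align*}
\text{maximize}\quad & \one^\ts\Ab\beta \\
\text{subject to}\quad & \sbb \ge -\Ab\beta,\quad \sbb \ge \zero,\quad \one^\ts\sbb \le 1 .
\end{align*}
For a fixed $\beta$, a feasible $\sbb$ exists precisely when $\sum_i \max\{0,-(\Ab\beta)_i\} = \|(\Ab\beta)^-\|_1 \le 1$, and $\sbb$ does not enter the objective; hence the program maximizes $\one^\ts\Ab\beta$ exactly over $\{\beta : \|(\Ab\beta)^-\|_1 \le 1\}$. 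A homogeneity argument then shows the relaxation is lossless: whenever the optimal ratio exceeds $1$, scaling $\beta$ up until $\|(\Ab\beta)^-\|_1 = 1$ only increases $\one^\ts\Ab\beta$, so the optimal value of the relaxed program equals $\mu_\yb(\Xb) - 1$. Reading off $\mu_\yb(\Xb) = 1 + \mathrm{OPT}$ (and, if desired, the reflected optimizer $-\beta^\star$ to match the ratio form stated before the theorem) completes the computation.

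Finally I would tie finiteness to boundedness and count the program. We have $\mu_\yb(\Xb) = \infty$ exactly when some $\beta \neq \zero$ satisfies $\Ab\beta \ge \zero$ with $\Ab\beta \neq \zero$; such a $\beta$ has negative-part norm $0$, so the feasible ray $\{t\beta : t \ge 0\}$ drives the objective unbounded, and conversely boundedness of the objective excludes this. Thus, under the finiteness hypothesis the program is feasible (take $\beta=\zero$) and bounded above, so it attains its optimum at a vertex with rational coordinates given the $\Ocal(\log nd)$-bit data, returning $\mu_\yb(\Xb)$ \emph{exactly}. For the stated dimensions, the $\beta$-block may be taken to have at most $n$ coordinates (the relevant dimension is $\min(n,d)\le n$, e.g.\ after restricting $\beta$ to the row space of $\Ab$, which leaves $\Ab\beta$ unchanged), which together with the $n$ slacks $\sbb$ gives at most $2n$ variables, while the $2n$ constraints defining $\sbb$, the budget constraint, and the standard-form conversion account for the $4n$ constraints. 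The main obstacle is the pair of correctness claims: (i) that relaxing the normalization to an inequality is lossless, which rests on the scaling/homogeneity argument, and (ii) the exact equivalence between finiteness of $\mu_\yb(\Xb)$ and boundedness of the LP; the linearization and the variable/constraint bookkeeping are then routine.
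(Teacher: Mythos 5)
Your proof is correct, and at its core it is the same reduction as the paper's: both arguments exploit the scale invariance of the ratio to fix a normalization, use the identity $\one^T\zb = \|\zb^+\|_1 - \|\zb^-\|_1$ to turn the ratio into a linear objective, and linearize the remaining $\ell_1$ quantity so that a single LP of size roughly $2n$ computes $\mu_\yb(\Xb)$ exactly. The differences lie in the formulation and in how the answer is extracted. The paper flips the ratio (without loss of generality, via $\beta \mapsto -\beta$), imposes the budget $\|(\Db_\yb\Xb\beta)^+\|_1 \leq C$, changes variables to $\zb = \Db_\yb\Xb\beta$ with an explicit range constraint $(\Ib - \Pb_R)\zb = \zero$ and the split $\zb = \zb_+ - \zb_-$ (this is where its $2n$ variables come from), and then recovers an optimizer $\beta^*$ by solving $\zb^* = \Db_\yb\Xb\beta^*$ and evaluates the ratio at $\beta^*$ to output $\mu_\yb(\Xb)$. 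Your LP instead normalizes the negative part to $1$, keeps $\beta$ as a variable (restricted to the row space of $\Db_\yb\Xb$, which plays exactly the role of the paper's projection constraint) alongside slack variables for the negative part, and reads off $\mu_\yb(\Xb) = 1 + \opt$ directly from the optimal value, with no recovery step. Your version has two genuine advantages: first, you never assume the supremum in Definition~\ref{def:complexity_measure} is attained --- your two-sided argument (feasible-set inclusion in one direction, the scaling bound at an LP optimum in the other) works with the supremum itself, whereas the paper's proof begins with an argmax $\beta^*$ whose existence it simply assumes (with only a footnote about uniqueness); second, you make explicit the equivalence between finiteness of $\mu_\yb(\Xb)$ and boundedness of the LP, which the paper's proof leaves implicit even though the theorem statement hypothesizes finiteness. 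The one place you are weaker is the bookkeeping: the row-space restriction of $\beta$ should be built into the LP construction from the start (rather than appearing as an afterthought) to honestly reach $2n$ variables, and your constraint count of $2n+1$ is then rounded up to $4n$ --- though the paper's own count of $4n$ constraints for a program with $n$ equality, one budget, and $2n$ nonnegativity constraints is no less generous.
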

We conclude the section by noting that prior experimental evaluations of coreset constructions in~\cite{munteanu2018coresets, mai2021coresets} relied on estimates of $\mu_\yb(\Xb)$ using the method provided by~\citet{munteanu2018coresets}. We will empirically compare how prior methods of estimating the complexity measure compare to our exact method in Section~\ref{section:experiments}.

\section{Experiments}\label{section:experiments}
We provide empirical evidence verifying the algorithm of Section~\ref{sxn:computingmu} to exactly computes the classification complexity measure $\mu_\yb(\Xb)$ of Definition~\ref{def:complexity_measure}. We compare our results with the approximate sketching-based calculation of~\citet{munteanu2018coresets}. 

In order to estimate $\mu_\yb(\Xb)$ for a dataset using the sketching-based approach of~\citet{munteanu2018coresets}, we create several smaller sketched datasets of a given full dataset of size $n\times d$ ($n$ rows and $d$ columns). We then use a modified linear program along the lines of~\citet{munteanu2018coresets}. These new datasets are created so that they have number of rows $n' = \Ocal(d\log (d/\delta))$, for various values of $\delta \in [0,1]$, so that with probability at least $1-\delta$, the estimated $\mu_\yb(\Xb)$ will lie between some lower bound (given by $t$, the optimum value of the aforementioned linear program) and an upper bound (given by $t \cdot \Ocal(d\log d)$). In order to solve the modified linear programs, we make use of the OR-tools\footnote{https://developers.google.com/optimization}.

\textbf{Synthetic data:} We create the synthetic dataset as follows. First, we construct the \emph{full} data matrix $\Xb \in \mathbb{R}^{n\times d}$ by drawing $n=10,000$ samples from the $d$-dimensional Gaussian distribution $\mathcal{N}(0,\Ib_d)$ with $d=100$. We generate an arbitrary $\beta \in \mathbb{R}^d$ with $\|\beta\|_2 = 1$ and generate the posterior $p(\yb_i|\xb_i) = 1/(1+\exp(-\beta^\top \xb_i +\mathcal{N}(0,1)))$. Finally, we create the labels $\yb_i$ for all $i=1\ldots n$ by setting $\yb_i$ to one if $p(\yb_i|\xb_i) > 0.5$ and to $-1$ otherwise.

Using the full data matrix $\Ab = \Db_\yb\Xb$, we create several sketched data matrices having a number of rows equal to $n' = \Ocal(d \log (d/\delta))$. We choose $\delta$ so that $n' \in\{512,1024,2048,4096,8192\}$. These values of $n'$ are chosen to be powers of two so that we can employ the Fast Cauchy Transform algorithm (\texttt{FastL1Basis}~\citep{Clarkson2013b}) for sketching. The algorithm is meant to ensure that the produced sketch identifies $\ell_1$ well-conditioned bases for $\Ab$, which is a prerequisite for using the subsequent linear program to estimate $\mu_\yb(\Xb)$. (See Section~\ref{sec:appendixExperiments} for details).

The results are presented in Figure~\ref{fig:sub:simexperiments}. For various values of $n'$, including when $n'=n=10,000$, which we deem to be the original data size, we show the exact computation of $\mu_\yb(\Xb)$ on the sketched matrix using the linear program of Theorem~\ref{thm:compute_complexity_measure}. We also show the corresponding upper and lower bounds on $\mu_\yb(\Xb)$ of the full data set as estimated by the well-conditioned basis hunting approximation proposed by~\citet{munteanu2018coresets}. For the lower bound, we use the optimum value of the  modified linear program as proposed in~\citet{munteanu2018coresets} and detailed in Section~\ref{sec:appendixExperiments}. We set the upper bound by multiplying the lower bound by $d\log (d/\delta)$. Note that this upper bound is conservative, and the actual upper bound could be a constant factor higher, since the guarantees of~\citet{munteanu2018coresets} are only up to a constant factor. The presented results are an average over 20 runs of different sketches for each value of $n'$. Figure~\ref{fig:sub:simexperiments} shows that the exact computations on sketched matrices are close to the actual $\mu_\yb(\Xb)$ of the full data matrix, while the upper and lower bounds as proposed by~\citet{munteanu2018coresets} can be fairly loose. 

\textbf{Real data experiments:} To test our setup on real data, we make use of the \texttt{sklearn KDDcup} dataset.\footnote{\texttt{sklearn.datasets.fetch\_kddcup99()} provides an API to access this dataset.} The dataset consists of 100654 data points and over 50 features. We only use continuous features which reduces the feature size to 33. The dataset contains 3377 positive data points, while the rest are negative. To create various sized subsets for exact calculation, we subsample from positives and negatives so that they are in about equal proportion. For larger subsamples, we retain all the positives, and subsample the rest from the negative data points. Since the calculation of $\mu$ for the full dataset is intractable as it will require to solve an optimization problem of over 400k constraints, we subsample 16384 data points and use its $\mu$ as proxy for that of the full dataset (referred to as \texttt{ExactFull}. For such a large subsample, the error bars are small. We compare against sketching and exact $\mu$ calculations for smaller subset (See Figure~\ref{fig:sub:realexperiments} for the results).

\begin{figure}
\centering
  \begin{subfigure}{0.45\textwidth}
    \includegraphics[width=\textwidth]{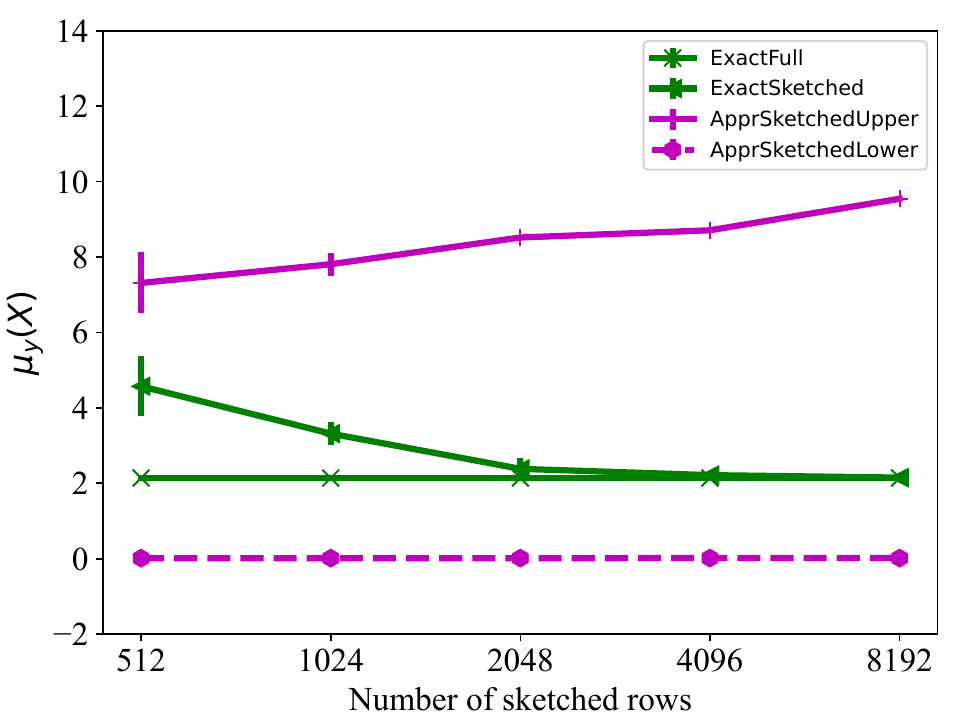} 
    \caption{Simulated data}
    \label{fig:sub:simexperiments}
  \end{subfigure}%
  \begin{subfigure}{0.45\textwidth}
    \includegraphics[width=\textwidth]{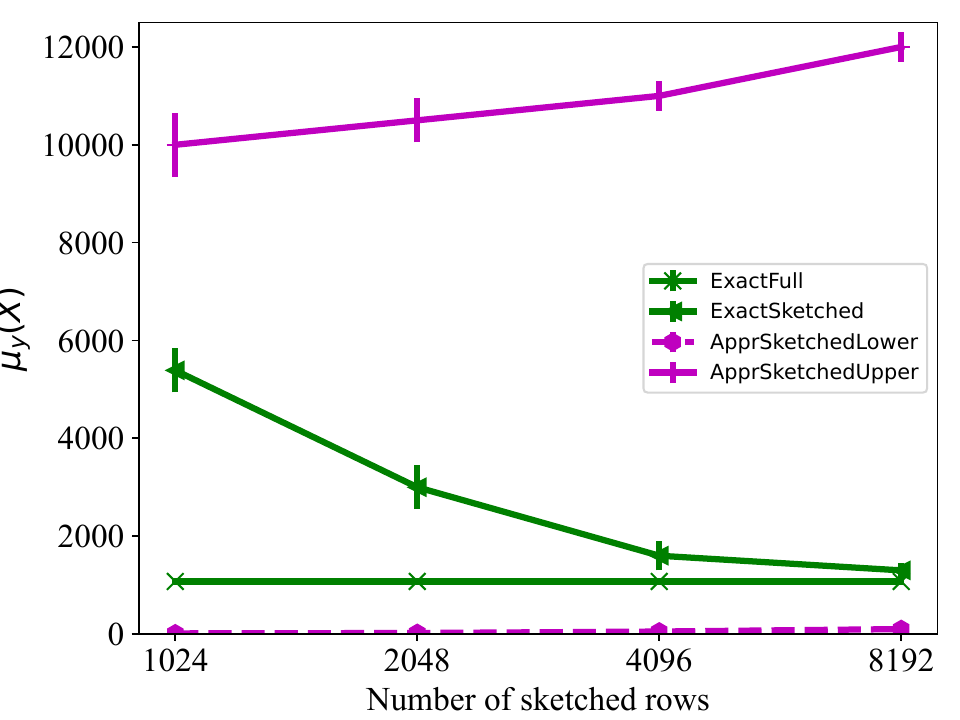}
    \caption{\texttt{KDDCup} dataset}
    \label{fig:sub:realexperiments}
  \end{subfigure}
  
\caption{Simulated data results for exact computation of $\mu_\yb(\Xb)$ (Theorem~\ref{thm:compute_complexity_measure}) using the full data (\texttt{Exactfull}), sketched data (\texttt{ExactSketched}) vs the approximate upper (\texttt{ApprSketchedUpper}) and lower bounds (\texttt{ApprSketchedLower}) as suggested by~\citet{munteanu2018coresets} (see Section~\ref{sec:appendixExperiments}). The results clearly show that the upper and lower bounds can be very loose compared to our exact calculation of the complexity measure $\mu_\yb(\Xb)$}.
\label{fig:mu_experiments}
\end{figure}

\section{Future work}

Our work shows that existing coresets are optimal up to lower order factors in the regime where $\mu_\yb(\Xb) = \Ocal(1)$. A clear open direction would be proving a space complexity lower bound that holds for all valid values of $\epsilon$ and $\mu_\yb(\Xb)$. \textcolor{red}{Additionally, there is still a $d$ factor gap between existing upper bounds \cite{munteanu2024optimal} and our lower bounds (Theorems \ref{thm:space_lower_bound} and \ref{thm:logistic_lower_complexity}) in the regime where $\epsilon$ is constant and the complexity measure varies or the complexity measure is constant and $\epsilon$ varies respectively.} 

Another interesting direction would be to explore whether additional techniques can further reduce the space complexity in approximating logistic loss compared to coresets alone. While Theorem \ref{thm:space_lower_bound} shows that the size of coreset constructions are essentially optimal, it does not preclude reducing the space by a $d$ factor by using other methods. In particular, the construction of $\Xb$ used in the proof of Theorem \ref{thm:space_lower_bound} is sparse, and so existing matrix sparsification methods would save this $d$ factor here.

\textcolor{blue}{We also note that our first lower bound (Theorem \ref{thm:space_lower_bound}) only applies to data structures providing the the ``for-all'' guarantee on the logistic loss, i.e., with a given probability, the error guarantee in eqn.(\ref{eqn:relative_error_logistic}) holds for all $\beta \in \R^d$. It would be interesting to know if it could strengthened to apply in the ``for-each'' setting as Theorem \ref{thm:logistic_lower_complexity} does, where $\beta \in \R^d$ is arbitrary but fixed.}

Finally, it would be useful to gain a better understanding of the complexity measure $\mu_\yb(\Xb)$ in real data through more comprehensive experiments using our method provided in Theorem \ref{thm:compute_complexity_measure}. In particular, subsampling points of a data set may create bias when computing $\mu_\yb(\Xb)$.

\section*{Acknowledgments}

We thank the anonymous reviewers for useful feedback on improving the presentation of this work. Petros Drineas and Gregory Dexter were partially supported by NSF AF 1814041, NSF FRG 1760353, and DOE-SC0022085. Rajiv Khanna was supported by the Central Indiana Corporate Partnership AnalytiXIN Initiative.

\bibliography{bibliography}
\bibliographystyle{plainnat}

\appendix
\onecolumn

\section{Preliminaries}

In this section, we list some useful results from prior work.

\subsection{Hoeffding's inequality}

We use the following formulation of Hoeffding's inequality.

\begin{theorem}[Theorem 2 in \cite{Hoeff63}]\label{thm:hoeffding}
    Let $X_1,\ldots,X_n$ be independent random variables with $m_i\leq X_i \leq M_i$, $1\leq i\leq n$. Then, for any $t > 0$, 
    \begin{align*}
        \PP\left(\Big|\sum_{i=1}^n (X_i - \EE[X_i])\Big| \geq t\right)
        \leq 2\exp\left(\frac{-2t^2}{\sum_{i=1}^n (M_i - m_i)^2}\right).
    \end{align*}
\end{theorem}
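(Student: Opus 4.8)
The plan is to prove the two-sided bound by first establishing the one-sided tail estimate $\PP(\sum_{i=1}^n (X_i - \EE[X_i]) \geq t) \leq \exp(-2t^2/\sum_i (M_i-m_i)^2)$ and then applying the identical argument to the variables $-X_i$ (which satisfy $-M_i \leq -X_i \leq -m_i$, an interval of the same width $M_i - m_i$) and union-bounding the two tails, which contributes the leading factor of $2$. To set up the one-sided bound I would center the summands, writing $Y_i = X_i - \EE[X_i]$, so that $\EE[Y_i]=0$ and each $Y_i$ ranges over an interval of width $M_i - m_i$. The first tool is the exponential Markov (Chernoff) inequality: for every $s>0$, $\PP(\sum_i Y_i \geq t) = \PP(e^{s\sum_i Y_i} \geq e^{st}) \leq e^{-st}\,\EE[e^{s\sum_i Y_i}] = e^{-st}\prod_{i=1}^n \EE[e^{sY_i}]$, where the final factorization into a product of moment generating functions uses the independence of the $X_i$ (hence of the $Y_i$).

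The crux of the argument, and the step I expect to be the main obstacle, is Hoeffding's lemma: any mean-zero random variable $Y$ taking values in $[a,b]$ satisfies $\EE[e^{sY}] \leq \exp(s^2(b-a)^2/8)$. I would prove this by exploiting convexity of $y\mapsto e^{sy}$ on $[a,b]$, which gives the pointwise linear upper bound $e^{sy} \leq \frac{b-y}{b-a}e^{sa} + \frac{y-a}{b-a}e^{sb}$; taking expectations and using $\EE[Y]=0$ yields $\EE[e^{sY}] \leq \frac{b}{b-a}e^{sa} - \frac{a}{b-a}e^{sb}$. Writing the right-hand side as $e^{\phi(u)}$ with $u = s(b-a)$, a direct computation shows $\phi(0)=\phi'(0)=0$ and $\phi''(u) = \rho(1-\rho) \leq \tfrac14$ for an appropriate weight $\rho \in [0,1]$ depending on $u$; a second-order Taylor expansion with remainder then gives $\phi(u) \leq u^2/8$, which is exactly the claimed MGF bound. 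Verifying the estimate $\phi''(u) \leq \tfrac14$ is the delicate part, since it requires recognizing the second derivative as the variance of a Bernoulli-type quantity, which is maximized at $\tfrac14$.

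With Hoeffding's lemma in hand, applying it to each $Y_i$ (with $b-a = M_i - m_i$) gives $\prod_i \EE[e^{sY_i}] \leq \exp(\frac{s^2}{8}\sum_i (M_i-m_i)^2)$, so that $\PP(\sum_i Y_i \geq t) \leq \exp(-st + \frac{s^2}{8}\sum_i (M_i-m_i)^2)$ for every $s>0$. The final step is to optimize over the free parameter: the exponent is a convex quadratic in $s$ minimized at $s^\star = 4t/\sum_i (M_i-m_i)^2 > 0$, and substituting $s^\star$ collapses the exponent to $-2t^2/\sum_i(M_i-m_i)^2$. Combining this with the symmetric one-sided bound obtained by repeating the argument for $-Y_i$ yields the stated two-sided inequality, with the factor of $2$ arising from the union bound over the two tails.
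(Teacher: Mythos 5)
Your proposal is correct: the Chernoff/MGF reduction, Hoeffding's lemma proved via convexity with the bound $\phi''(u)=\rho(1-\rho)\le \tfrac{1}{4}$ and a second-order Taylor expansion, the optimization at $s^{\star}=4t/\sum_{i}(M_i-m_i)^2$ collapsing the exponent to $-2t^2/\sum_i (M_i-m_i)^2$, and the union bound over the two tails supplying the factor of $2$ are all accurate. The paper does not prove this statement at all --- it imports it by citation as Theorem 2 of Hoeffding (1963) --- and your argument is precisely the classical proof of that cited theorem, so it matches the intended derivation and there is nothing further to compare.
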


\subsection{\texttt{INDEX} problem}
\label{sec:indexDefinition}
Both Theorem \ref{thm:space_lower_bound} and Theorem \ref{thm:logistic_lower_complexity} rely on a reduction to the randomized \texttt{INDEX} problem. We define the \texttt{INDEX} problem as a data structure as done in \cite{li2021tight}.
\begin{definition}
    (\texttt{INDEX} problem) The \texttt{INDEX} data structure stores an input string $\yb \in \{0,1\}^n$ and supports a query function, which receives input $i\in[n]$ and outputs $\yb_i \in \{0,1\}$ which is the $i$-th bit of the underlying string.
\end{definition}

The following theorem provides a space complexity lower bound for the \texttt{INDEX} problem.

\begin{theorem}\label{thm:index_space_lower}
    (see \cite{miltersen1995data}) In the \texttt{INDEX} problem, suppose that the underlying string $\yb$ is drawn uniformly from $\{0, 1\}^n$ and the input $i$ of the query function is drawn uniformly from $[n]$. Any (randomized) data structure for \texttt{INDEX} that succeeds with probability at least $2/3$ requires $\Omega(n)$ bits of space, where the randomness is taken over both the randomness in the data structure and the randomness of $s$ and $i$. 
\end{theorem}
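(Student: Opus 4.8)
The plan is to reduce any space-$s$ data structure for \texttt{INDEX} to a one-way communication protocol and then lower-bound the message length by an information-theoretic argument. Think of the contents of the data structure as a message $M$ that an encoder produces from the string $\yb$, and of the query algorithm as a decoder that, given $M$ and a query index $i$, outputs a guess $\widehat{\yb}_i$. Since the data structure occupies at most $s$ bits, $M$ is at most $s$ bits long, so $H(M) \le s$.

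First I would remove the randomness. The data structure may use internal randomness $R$ both to build $M$ and to answer queries; by an averaging argument (Yao's principle) over the joint distribution of $\yb$, $i$, and $R$, there is a fixed setting of $R$ for which the resulting \emph{deterministic} data structure still answers a uniformly random query on a uniformly random $\yb$ correctly with probability at least $2/3$, while using at most $s$ bits. For this deterministic structure $M = M(\yb)$ is a function of $\yb$, so $I(M;\yb) = H(M) - H(M \mid \yb) = H(M) \le s$.

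Next I would lower-bound $I(M;\yb)$. Writing $p_i = \pr{\widehat{\yb}_i \neq \yb_i}$ for the error on index $i$ over the uniform $\yb$, the success hypothesis gives $\bar p \defeq \frac1n\sum_i p_i \le \frac13$. Because the bits $\yb_1,\dots,\yb_n$ are independent and uniform, the chain rule together with the fact that $\yb_i$ is independent of $\yb_{<i}$ yields $I(M;\yb) = \sum_i I(\yb_i; M \mid \yb_{<i}) \ge \sum_i I(\yb_i; M)$, using that adding $\yb_{<i}$ to the conditioning only increases the information. Since $\widehat{\yb}_i$ is a function of $M$ with $i$ fixed, Fano's inequality gives $H(\yb_i \mid M) \le H_b(p_i)$, where $H_b$ denotes binary entropy, so $I(\yb_i; M) = 1 - H(\yb_i \mid M) \ge 1 - H_b(p_i)$. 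Combining, $s \ge I(M;\yb) \ge n - \sum_i H_b(p_i)$, and by concavity of $H_b$ and Jensen's inequality $\frac1n\sum_i H_b(p_i) \le H_b(\bar p) \le H_b(1/3) < 1$. Therefore $s \ge n\bigl(1 - H_b(1/3)\bigr) = \Omega(n)$, since $1 - H_b(1/3) > 0$ is an absolute constant.

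The main obstacle—indeed essentially the only place needing care—is the derandomization/reduction step: one must verify that fixing the internal randomness simultaneously preserves the space bound and the average success probability, and that the chain-rule step correctly invokes independence of the bits to pass from the conditional mutual informations $I(\yb_i;M\mid \yb_{<i})$ to the unconditional $I(\yb_i;M)$. The Fano and Jensen steps are then routine once the information-theoretic setup is in place.
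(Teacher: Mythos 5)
The paper itself offers no proof of this theorem: it is imported by citation from \cite{miltersen1995data}, so there is no in-paper argument to compare against. Your proposal is a correct, self-contained derivation, and it is essentially the classical information-theoretic proof of the \texttt{INDEX} lower bound: since the theorem is stated distributionally (uniform $\yb$, uniform $i$, success probability over all randomness), only the easy averaging direction of Yao's principle is needed to fix the internal randomness, after which $I(M;\yb)=H(M)\le s$ holds because $M$ is a deterministic function of $\yb$. Your chain-rule step correctly exploits the independence of the bits to pass from $I(\yb_i;M\mid \yb_{<i})$ to $I(\yb_i;M)$, and the Fano step is valid for every value of $p_i$ in the binary case (via data processing, $H(\yb_i\mid M)\le H(\yb_i\mid \widehat{\yb}_i)\le H_b(p_i)$), so Jensen and the monotonicity of $H_b$ on $[0,\nicefrac{1}{2}]$ give $s \ge n\left(1-H_b(1/3)\right)=\Omega(n)$ as claimed. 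The only cosmetic remark is that the paper's statement contains a typo (``the randomness of $s$ and $i$'' should read the randomness of $\yb$ and $i$), which your proof implicitly corrects.
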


\section{Proofs}\label{sxn:proofs}

\textbf{Proof for Lemma \ref{lemma:logistic_gives_relu}}
\begin{proof}
    Let $\Rcal_{\min} = \inf_{\beta \in \R^d} \Rcal(\beta)$. We now prove that a data structure, $\Lcaltil(\cdot)$, which approximates logistic loss to relative error can be used to construct an approximation to the ReLu loss, $\Rcaltil(\cdot)$, by $\Rcaltil(\beta) = \frac{1}{t} \cdot \Lcaltil(t\cdot \beta)$ for large enough constant $t > 0$.  To show this, we start by bounding the approximation error of the logistic loss for a single point.  First, we derive the following inequality when $r>0$:
\begin{gather*}
    \frac{1}{t}\log(1 + e^{t\cdot r}) - r = \frac{1}{t}\left(\log(e^{rt}) + \log\left(\frac{1 + e^{rt}}{e^{rt}}\right)\right) - r
    = \frac{1}{t}\cdot\log\left(1 + \frac{1}{e^{rt}}\right) \leq \frac{1}{t \cdot e^{rt}}.
\end{gather*}
Therefore, if $\xb_i^T\beta > 0$, then $|\frac{1}{t}\log(1 + e^{t\cdot\xb_i^T\beta}) - \xb_i^T\beta| < \frac{1}{t\cdot e^{t\cdot\xb_i^T\beta}}$.  Next, we consider the case where $r \leq 0$, in which case $\operatorname{ReLu}(r) = 0$. It directly follows that $0 \leq \frac{1}{t}\log(1 + e^{t\cdot r})\leq \frac{e^{t\cdot r}}{t}$.  Therefore,
\begin{gather*}
    |\frac{1}{t}\cdot\log(1+e^{t\cdot r}) - \max\{0, r\}| \leq \frac{1}{t \cdot e^{t\cdot|r|}} \leq \frac{1}{t}.
\end{gather*}
We use these inequalities to bound the difference in the transformed logistic loss and ReLu loss as follows:
\begin{align}\label{eqn:logistic_to_ReLu}
    |\frac{1}{t}\cdot\Lcal(t\cdot\beta) - \Rcal(\beta)| &= \Big|\sum_{i=1}^n \frac{1}{t}\log(1+e^{t\cdot\xb_i^T\beta}) - \max\{0, \xb_i^T\beta\}\Big| \nonumber \\
    &\leq \sum_{i=1}^n |\frac{1}{t} \log(1+e^{t\cdot\xb_i^T\beta}) - \max\{0, \xb_i^T\beta\}| \leq \frac{n}{t}.
\end{align}
Therefore, if we set $t^* = \frac{n}{\epsilon \cdot \Rcal_{\min}(\beta)}$, then 
\begin{align}\label{eqn:lr_relu_approx}
|\frac{1}{t^*} \Lcal(t^*\cdot \beta) - \Rcal(\beta)| \leq \epsilon\Rcal_{\min}(\beta) \leq \epsilon\Rcal(\beta),
\end{align}
 for all $\beta \in \Bcal$. We can then show that $\Lcaltil(\cdot)$ can be used to approximate $\Rcal(\cdot)$ by applying the triangle inequality, the error guarantee of $\Lcaltil(\cdot)$ and eqn.~\ref{eqn:lr_relu_approx}. For all $\beta \in \Bcal$:
\begin{align*}
    \Big|\frac{1}{t^*} \cdot \Lcaltil\left(t^* \cdot \beta\right) - \Rcal(\beta)\Big|
    &\leq \Big|\frac{1}{t^*} \cdot \Lcaltil\left(t^* \cdot \beta\right) - \frac{1}{t^*} \cdot \Lcal\left(t^* \cdot \beta\right)\Big| + \Big|\frac{1}{t^*} \cdot \Lcal\left(t^* \cdot \beta\right) - \Rcal(\beta)\Big| \\
    &\leq  \epsilon \cdot \frac{1}{t^*} \cdot \Lcal\left(t^* \cdot \beta \right) + \epsilon \Rcal(\beta) \\
    &\leq  \epsilon \cdot \left(\Rcal(\beta) + \big|\Rcal(\beta) - \frac{1}{t^*} \cdot \Lcal\left(t^* \cdot \beta \right)\big|\right) + \epsilon \Rcal(\beta) \\
    &\leq \epsilon \cdot \Rcal(\beta) + \epsilon^2 \cdot \Rcal(\beta) + \epsilon \cdot \Rcal(\beta) \leq 3 \epsilon \Rcal(\beta).
\end{align*}
Note that we can set $\Rcal_{\min}$ to one and achieve the same error guarantee, since $\inf_{\beta \in \Bcal} \Rcal(\beta) > 1$. Therefore, storing $t^*$ takes $\Ocal(\log \frac{n}{\epsilon})$ space.

\end{proof}


\textbf{Proof for Lemma \ref{lemma:relu_lower}}
\begin{proof}
    We will first lower bound the space needed by any data structure which approximates ReLu loss to $\epsilon$-relative error. Later, we will show that this implies a lower bound on the space complexity of any data structure $f(\cdot)$ for approximating logistic loss. Let $\Rcaltil(\cdot)$ approximate $\Rcal(\cdot)$ such that $\Rcal(\beta) \leq \Rcaltil(\beta) \leq (1+\epsilon)\Rcal(\beta)$ for all $\beta \in \R^d$. We can rewrite $\Rcal(\beta)$ as follows:
\begin{flalign*}
    \Rcal(\beta) &= \sum_{i=1}^n \max\{0, \xb_i^T\beta\} 
    = \sum_{i=1}^n \nicefrac{1}{2}\cdot \xb_i^T\beta + \nicefrac{1}{2}\cdot |\xb_i^T\beta| 
    = \frac{1}{2}\one^T \Xb\beta + \frac{1}{2}\|\Xb\beta\|_1.
\end{flalign*}
We next use the fact that $\Rcal(\beta) \leq \|\Xb\beta\|_1$ to get
\begin{flalign*}
    |\Rcal(\beta) - \Rcaltil(\beta)|
    = |\frac{1}{2}\one^T \Xb\beta + \frac{1}{2}\|\Xb\beta\|_1 - \Rcaltil(\beta)| \leq \epsilon \Rcal(\beta) \\
    \Rightarrow |\frac{1}{2}\one^T \Xb\beta + \frac{1}{2}\|\Xb\beta\|_1 - \Rcaltil(\beta)| \leq \epsilon\|\Xb\beta\|_1.
\end{flalign*}

We can store $\one^T\Xb$ exactly in $\Ocal(d)$ space as a length $d$ vector.  We define a new data structure $\Hcal(\cdot)$ such that $\Hcal(\beta) = 2\Rcaltil(\beta) - \one^T\Xb\beta$, and, using the above inequality, $\Hcal(\beta)$ satisfies:
\begin{flalign*}
|\|\Xb\beta\|_1 - \Hcal(\beta)| \leq 2\epsilon\|\Xb\beta\|_1,   
\end{flalign*}
for all $\beta \in \R^d$.  Therefore, $\Hcal(\beta)$ is an $\epsilon$-relative approximation to $\|\Xb\beta\|_1$ after adjusting for constants and solves the $\ell_1$-subspace sketch problem (see Definition 1.1 in \cite{li2021tight}). By Corollary 3.13 in \cite{li2021tight}, the data structure $\Hcal(\cdot)$ requires $\Omegatil\left(\frac{d}{\epsilon^2}\right)$ bits of space  if $d = \Omega(\log 1/\epsilon)$ and $n = \Omegatil\left(d\epsilon^{-2}\right)$. Therefore, we conclude that any data structure which approximates $\Rcal(\beta)$ to $\epsilon$-relative error for all $\beta \in \R^d$ with at least $2/3$ probability must use $\Omegatil\left(\frac{d}{\epsilon^2}\right)$ bits in the worst case.

The proof in \cite{li2021tight} that leads to Corollary 3.13 proceeds by constructing a matrix $\Ab$ ($\Xb$ in our notation) and showing that $\epsilon$-relative error approximations to $\|\Ab\beta\|_1$ require the stated space complexity. We next show that, $\mu_\yb(\Ab) \leq 4$. The construction of $\Ab$ is described directly above Lemma 3.10. This matrix $\Ab$ is first set to contain all $2^k$ unique $k$ length vectors in $\{-1,1\}^k$ for some value of $k$. Each row of $\Ab$ is then re-weighted: specifically, the $i$-th row of $\Ab$ is re-weighted by some $y_i \in [2\sqrt{k}, 8\sqrt{k}]$. 

For any vector $\beta \in \R^d$, $\|(\Ab\beta)^+\|_1 \leq 4\cdot\|(\Ab\beta)^-\|_1$ by the following argument. For an arbitrary $i \in [n]$, let $\Ab_i = y_i \cdot \vb$ where $\{\pm 1\}^k$. Then there exists a unique $j \in [n]$ such that $\Ab_j = -1\cdot y_j \cdot \vb$, and so $\Ab_j\beta < 0$. Furthermore, since $y_j \geq 2\sqrt{k}$ and $y_i \leq 8\sqrt{k}$, $\frac{|\Ab_i\beta|}{|\Ab_j\beta|} \leq 4$. By applying this argument to each row of $\Ab$ where $\Ab_i\beta > 0$, we conclude that $\|(\Ab\beta)^+\|_1 \leq 4\cdot\|(\Ab\beta)^-\|_1$, and hence $\mu_\yb(\Xb) \leq 4$.

Finally, we show that $\Rcal(\beta; \Xb)$ is lower bounded in this construction. Given any vector $\beta \in \R^d$, there exists a row of $\Xb$ such that $\Xb_{ij} \cdot \beta_j \geq 0$ for all $j \in \{1\ldots d\}$, that is, the $j$-th entry of $\Xb_i$ has the same sign as $\beta_j$ in all non-zero entries of $\beta$. Therefore, $\Xb_i\beta = y_i \cdot \|\beta\|_1$. Since $y_i \geq 3\sqrt{k}$, we conclude that $\Rcal(\beta; \Xb) \geq 3\|\beta\|_1$.
\end{proof}


\textbf{Proof of Lemma \ref{lemma:approx_orthogonal_matrix}}

\begin{proof}

Let $\Ab \in \{-1,1\}^{n \times k}$ be a random matrix where each entry is uniformly sampled from $\{-1,1\}$ in independent identical trials. Let $\{R_r\}_{r\in[k]}$ be independent Rademacher random variables. Then,
\begin{gather*}
    \PP(|\langle\Ab_i, \Ab_j\rangle| \geq t) 
    = \PP\left(|\sum_{r=1}^k R_r| \geq t\right) 
    \leq 2\exp\left(\frac{-t^2}{2 k}\right),
\end{gather*}
where the last step follows from applying Hoeffding's inequality. There are fewer than $n^2$ tuples $(i,j) \in \{1\ldots d\} \times \{1\ldots d\}$ such that $i \neq j$. Therefore, by an application of the union bound, 
\begin{gather*}
    \PP(|\langle\Ab_i, \Ab_j\rangle| \geq t ~~\text{for all }i \neq j) \leq n^2 \cdot 2\exp\left(\frac{-t^2}{2 k}\right).
\end{gather*}
Setting $t = 4\sqrt{k} \cdot \log n$, we see that the right side of the inequality is less than one whenever $n > 1$. Therefore, there must exist a matrix $\Mb$ satisfying the specified criteria. 
\end{proof}


\textbf{Proof of Theorem \ref{thm:compute_complexity_measure}}

\begin{proof}
    We now derive a linear programming formulation to compute the complexity measure in Definition \ref{def:complexity_measure}. Note that we flip the numerator and denominator from Definition \ref{def:complexity_measure} without loss of generality. Let $\beta^* \in \R^d$ be:\footnote{For notational simplicity, we assume $\beta^*$ is unique, but this is not necessary.}
\begin{gather*}
    \beta^* = \argmax_{\|\beta\|_2 = 1} \frac{\|(\Db_\yb\Xb\beta)^-\|_1}{\|(\Db_\yb\Xb\beta)^+\|_1}  \\
    \Rightarrow 
    \mu_\yb(\Xb) = \frac{\|(\Db_\yb\Xb\beta^*)^-\|_1}{\|(\Db_\yb\Xb\beta^*)^+\|_1} 
\end{gather*}
The second line above uses the fact that the definition of $\mu_\yb(\Xb)$ does not depend on the scaling of $\beta$. If $C$ is an arbitrary positive constant, then there exists a constant $c > 0$ such that:
\begin{flalign*}
    c \cdot \beta^* &= \underset{\beta \in \R^d}{\argmax} ~ \|(\Db_\yb\Xb\beta)^-\|_1 \text{ such that } \|(\Db_\yb\Xb\beta)^+\|_1 \leq C \\
    &= \underset{\beta \in \R^d}{\argmax} ~ \|\Db_\yb\Xb\beta\|_1 \text{ such that } \|(\Db_\yb\Xb\beta)^+\|_1 \leq C.
\end{flalign*}
Again, $\nicefrac{\|(\Db_\yb\Xb\beta^*)^-\|_1}{\|(\Db_\yb\Xb\beta^*)^+\|_1} $ is invariant to rescaling of $\beta^*$, so we may assume that $c$ is equal to one without loss of generality.
We now reformulate the last constraint as follows:
\begin{align*}
\|(\Db_\yb\Xb\beta)^+\|_1 &= \sum_{i=1}^n \max\{[\Db_\yb\Xb\beta]_i, 0\} \\
&= \sum_{i=1}^n \frac{1}{2} [\Db_\yb\Xb\beta]_i + \frac{1}{2}|[\Db_\yb\Xb\beta]_i| \\
&= \frac{1}{2}\one^T\Db_\yb\Xb\beta + \frac{1}{2}\|\Db_\yb\Xb\beta\|_1.  
\end{align*}
Therefore, the above formulation is equivalent to:
\begin{flalign*}
    \beta^* = \underset{\beta \in \R^d}{\argmax}& ~ \|\Db_\yb\Xb\beta\|_1\quad
    \text{ such that } \frac{1}{2}\one^T\Db_\yb\Xb\beta + \frac{1}{2}\|\Db_\yb\Xb\beta\|_1 \leq C \\ 
    = \underset{\beta \in \R^d}{\argmin}& ~ \one^T\Db_\yb\Xb\beta \text{ such that } \|\Db_\yb\Xb\beta\|_1 \leq C.
\end{flalign*}
Next, we replace $\Db_\yb\Xb\beta$ with a single vector $\zb \in \R^n$ and a linear constraint to guarantee that $\zb \in \operatorname{Range}(\Db_\yb\Xb)$. Let $\Pb_R \in \R^{n \times n}$ be the orthogonal projection to $\operatorname{Range}(\Db_\yb\Xb)$. Then,
\begin{gather}\label{eqn:l1_formulation}
    \zb^* = \underset{\zb \in \R^d}{\argmin} ~ \one^T\zb \\\text{ such that } \|\zb\|_1 \leq C\quad \text{and}\quad (\Ib - \Pb_R)\zb = \zero.\nonumber
\end{gather}
Next, we solve this formulation by constructing a linear program such that $[\zb_+, \zb_-] \in \R^{2n}$ corresponds to the absolute value of the positive and negative elements of $\zb$, namely
\begin{align}\label{eqn:lp_formulation}
    \zb^* = \underset{\beta \in \R^d}{\argmin}& ~ \one_{n}^T(\zb_+ -\zb_-) \\ 
    \text{ such that }& \one_n^T (\zb_+ + \zb_-) \leq C \quad \text{and}\quad 
    (\Ib - \Pb_R)(\zb_+ - \zb_-)  = \zero \quad \text{and}\quad 
    \zb_+,\zb_- \geq 0.\nonumber
\end{align}
Observe that this is a linear program with $2n$ variables and $4n$ constraints.  After solving this program for $\zb_+^*$ and $\zb_-^*$, we can compute $\zb^* = \zb_+^* - \zb_-^*$.  From this, we can compute $\beta^*$ by solving the linear system $\zb^* = \Db_\yb\Xb\beta^*$, which is guaranteed to have a solution by the linear constraint $(\Ib - \Pb_R)\zb^* = \zero$.

After solving for $\beta^*$, we can compute $$\mu_\yb(\Xb) = \frac{\|(\Db_\yb\Xb\beta^*)^-\|_1}{\|(\Db_\yb\Xb\beta^*)^+\|_1},$$ thus completing the proof.
\end{proof}

\section{Modified linear program}\label{sec:appendixExperiments} 

For completeness, we reproduce the linear program of~\citet{munteanu2018coresets}[Section A] to estimate the complexity measure $\mu_\yb(\Xb)$:

\begin{align*}
\min \,\,\,\,& \sum_{i=1}^n b_i  \\
\text{         s.t. } & \forall i \in [n]: (U\beta)_i = a_i - b_i\\
& \forall i \in [d]: \beta_i = c_i - d_i\\
& \sum_{i=1}^d c_i + d_i \geq 1\\
& \forall i \in [n]: a_i, b_i \geq 0 \\
& \forall i \in [d]: c_i, d_i \geq 0.
\end{align*} 

Here, $U$ is $\ell_1$-well-conditioned basis~\citep{Clarkson2016} of the matrix $\Db_y\Xb$. Because of the well-conditioned property, $\mu$ can be estimated to be within the bounds:
\[\frac{1}{t} \leq \mu_\yb(\Xb) \leq \operatorname{poly}(d).\frac{1}{t}, \]

where $t = \min_{\|\beta\|_1=1} \|(U\beta)^-\|_1$, and recall that $d$ is the number of columns of $\Xb$.~\citet{munteanu2018coresets} designed the above linear program to solve for $t$. However, note that trivially the LP as it is written could be trivially solved with $\forall i c_i = d_i \implies \beta_i = 0 \implies a_i = b_i = 0$, which unfortunately gives $t=0$ always trivially. To get around this problem, we modify the above program as follows:

\begin{align*}
\min \,\,\,\,& \sum_{i=1}^n b_i  \\
\text{         s.t. } & \forall i \in [n]: (U\beta)_i = a_i - b_i\\
& \forall i \in [d]: h_i = c_i - \beta_i\\
& \forall i \in [d]: h_i = d_i + \beta_i\\
& \forall i \in [d]: c_i \leq Mv_i\\
& \forall i \in [d]: d_i \leq M(1-v_i)\\
& \sum_{i=1}^d h_i \geq 1\\
& \forall i \in [n]: a_i, b_i \geq 0 \\
& \forall i \in [d]: c_i, d_i, h_i \geq 0\\
& \forall i \in [d]: v_i \in \{0,1\}
\end{align*} 

Here, $M$ is a sufficiently large value as is often used for Big-M constraints\footnote{See \texttt{Linear and Nonlinear Optimization (2nd ed.). Society for Industrial Mathematics}} in linear programs. The variable $h_i$ simulates the $|\beta_i|$, and is set according to the binary variable $v_i$ which decides which one of $c_i$ or $d_i$ is 0. Further, note that $\sum_i h_i \geq 1 $ is equivalent to $\sum_i h_i = 1$ here since scaling down the norm of $\beta$ can only bring the optimization cost down. To solve the optimization problem, we use the \texttt{Google OR-tools} and the wrapper \texttt{pywraplp} with the solver \texttt{SAT} which can handle integer programs since the above program is no longer a pure linear program because of the binary variables $v_i$.

\section{Low rank approximation to logistic loss}\label{sxn:low_rank}

Here, we provide a very simple data structure that provides an additive error approximation to the logistic loss. While the method is straightforward, we are unaware of this approximation being specified in prior work, and it may be useful in the natural setting where the input matrix $\Xb$ has low stable rank.

We show that any low-rank approximation $\Xbbar$ of the data matrix $\Xb$ can be used to approximate the logistic loss function $\Lcal(\beta)$ up to a $\sqrt{n}\|\Xb - \Xbbar\|_2\|\beta\|_2$ additive error. The factor $\|\Xb - \Xbbar\|_2\|\beta\|_2$ is the spectral norm (or two-norm) error of the low-rank approximation and we also prove that this bound is tight in the worst case. Low rank approximations are commonly used to reduce the time and space complexity of numerical algorithms, especially in settings where the data matrix $\Xb$ is numerically low-rank or has a decaying spectrum of singular values.

Using low-rank approximations of $\Xb$ to estimate the logistic loss is appealing due to the extensive body work on fast constructions of low-rank approximations via sketching, sampling, and direct methods \cite{kishore2017literature}. We show that a spectral approximation provides an additive error guarantee for the logistic loss and that this guarantee is tight on worst-case inputs.

\begin{theorem}\label{thm:logistic_loss_upper}
    If $\Xb, \Xbtil \in \R^{n \times d}$, then for all $\beta \in \R^d$,
    \begin{gather*}
        |\Lcal(\beta; \Xb) - \Lcal(\beta, \Xbtil)| \leq \sqrt{n} \|\Xb - \Xbtil\|_2 \|\beta\|_2.
    \end{gather*}
\end{theorem}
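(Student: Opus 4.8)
The plan is to reduce everything to the elementary fact that the scalar logistic function $g(z) = \log(1+e^{-z})$ is $1$-Lipschitz. First I would write $\Lcal(\beta;\Xb) = \sum_{i=1}^n g(\yb_i\xb_i^T\beta)$ and compute $g'(z) = -1/(1+e^z)$, so that $|g'(z)| < 1$ for every $z \in \R$; hence $g$ is $1$-Lipschitz on all of $\R$. Applying this to each summand and using $|\yb_i| = 1$ gives, for each $i$,
\[
|g(\yb_i\xb_i^T\beta) - g(\yb_i\xbtil_i^T\beta)| \;\leq\; |\yb_i(\xb_i - \xbtil_i)^T\beta| \;=\; |(\xb_i - \xbtil_i)^T\beta|,
\]
which is precisely the absolute value of the $i$-th entry of the vector $(\Xb-\Xbtil)\beta$.

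Next I would sum over $i$ and apply the triangle inequality to obtain
\[
|\Lcal(\beta;\Xb) - \Lcal(\beta;\Xbtil)| \;\leq\; \sum_{i=1}^n |(\xb_i-\xbtil_i)^T\beta| \;=\; \nbr{(\Xb-\Xbtil)\beta}_1.
\]
From here the claimed bound follows from two standard norm inequalities. Cauchy--Schwarz (equivalently, $\nbr{\vb}_1 \leq \sqrt{n}\,\nbr{\vb}_2$ for $\vb\in\R^n$) converts the $\ell_1$ norm into an $\ell_2$ norm at the cost of a $\sqrt{n}$ factor, and the definition of the spectral norm gives $\nbr{(\Xb-\Xbtil)\beta}_2 \leq \nbr{\Xb-\Xbtil}_2\,\nbr{\beta}_2$. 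Chaining the three inequalities produces exactly $\sqrt{n}\,\nbr{\Xb-\Xbtil}_2\,\nbr{\beta}_2$.

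I do not expect a genuine obstacle here: the only substantive ingredient is the Lipschitz estimate on $g$, and the remaining steps are routine. The one place worth flagging is the $\sqrt{n}$ factor, which enters solely through the passage from the $\ell_1$ to the $\ell_2$ norm. This is the inequality that is loose in general but is saturated when $(\Xb-\Xbtil)\beta$ has entries of equal magnitude; that configuration is precisely the one I would exploit to construct the matching worst-case instance establishing the tightness asserted after the statement.
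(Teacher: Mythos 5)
Your proof is correct and follows essentially the same route as the paper's: both reduce to the fact that the scalar log-loss is $1$-Lipschitz (you via the derivative bound $|g'(z)| = 1/(1+e^z) < 1$, the paper via direct manipulation of the logarithms), then sum to get $\|(\Xb-\Xbtil)\beta\|_1$ and finish with $\|\cdot\|_1 \leq \sqrt{n}\|\cdot\|_2$ and the spectral norm inequality. Your closing remark about where the bound is saturated also matches the construction the paper uses for its matching lower bound (Theorem \ref{thm:logistic_loss_lower}).
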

\begin{proof}

To simplify the notation, let $\sbb = \Xb\beta$ and $\db = (\Xb - \Xbtil)\beta$.  We can then write the difference in the log loss as:
\begin{flalign*}
    |\Lcal(\beta; \Xb) - \Lcal(\beta, \Xbtil)|
    &= \left(\sum_{i=1}^n \log\left( 1 + e^{\xb_i^T\beta}\right) + \frac{\lambda}{2} \|\beta\|_2^2\right) - \left(\sum_{i=1}^n \log\left( 1 + e^{\xbtil_i^T\beta}\right) + \frac{\lambda}{2} \|\beta\|_2^2\right) \\
    &=\left|\sum_{i=1}^n \log \left(\frac{1 + e^{\sbb_i}}{1 + e^{\sbb_i + \db_i}}  \right)\right| \\
    &\leq \left|\sum_{i=1}^n \log \left(\frac{1 + e^{\sbb_i}}{1 + e^{\sbb_i - |\db_i|}}  \right)\right| \\
    &= \left|\sum_{i=1}^n \log \left(\frac{1 + e^{\sbb_i}}{1 + e^{-|\db_i|}e^{\sbb_i}}  \right)\right| \\
    &\leq \left|\sum_{i=1}^n \log \left(\frac{1}{e^{-|\db_i|}}\frac{1 + e^{\sbb_i}}{1 + e^{\sbb_i}}  \right)\right| \\
    &= \sum_{i=1}^n  |\db_i| = \|\db\|_1.
\end{flalign*}

Therefore, we can conclude that $|\Lcal(\beta; \Xb) - \Lcal(\beta; \Xbtil)| \leq \|\db\|_1 \leq \sqrt{n}\|\db\|_2 \leq \sqrt{n} \|\Xbtil - \Xb\|_2 \|\beta\|_2$.

\end{proof}
We note that Theorem~\ref{thm:logistic_loss_upper} holds for any matrix $\Xbtil \in \R^{n \times d}$ that approximates $\Xb$ with respect to the spectral norm, and does not necessitate that $\Xbtil$ has low-rank. We now provide a matching lower-bound for the logistic loss function in the same setting.
\begin{theorem}\label{thm:logistic_loss_lower}
    For every $d,n \in \mathbb{N}$ where $d \geq n$, there exists a data matrix $\Xb \in \R^{n \times d}$, label vector $\yb \in \{-1, 1\}^n$, parameter vector $\beta \in \R^d$, and spectral approximation $\Xbtil \in \R^{n \times d}$ such that:
    \begin{gather*}
        |\Lcal(\beta; \Xbbar) - \Lcal(\beta; \Xb)| \geq (1-\delta)\sqrt{n}\|\Xb - \Xbbar\|_2\|\beta\|_2,
    \end{gather*}
    for every $\delta > 0$.  
    Hence, the guarantee of Theorem \ref{thm:logistic_loss_upper} is tight in the worst case.
\end{theorem}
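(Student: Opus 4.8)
The plan is to run the inequality chain from the proof of Theorem~\ref{thm:logistic_loss_upper} in reverse and exhibit an instance on which every step is (asymptotically) an equality. With $\db = (\Xb-\Xbtil)\beta$, that proof established
\begin{gather*}
|\Lcal(\beta;\Xb) - \Lcal(\beta;\Xbtil)| \le \|\db\|_1 \le \sqrt{n}\,\|\db\|_2 \le \sqrt{n}\,\|\Xb-\Xbtil\|_2\,\|\beta\|_2,
\end{gather*}
so I must tighten all three steps at once. The third step is an equality exactly when $\beta$ is a top right singular vector of $\Xb-\Xbtil$; the Cauchy--Schwarz step $\|\db\|_1\le\sqrt n\,\|\db\|_2$ is an equality exactly when all coordinates of $\db$ share a common magnitude; and the first step, which uses the $1$-Lipschitzness of the softplus map $s\mapsto\log(1+e^{s})$, becomes tight when each $(\Xb\beta)_i$ is large and positive, since there softplus is nearly the identity and $\log(1+e^{(\Xb\beta)_i})\approx(\Xb\beta)_i$.

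A single rank-one instance meets all three conditions simultaneously. Fix $\yb=-\one_n$ so that $\Lcal(\beta;\Xb)=\sum_i\log(1+e^{\xb_i^T\beta})$ in the standard form of Section~\ref{sxn:notation}, take $\Xbtil=\zero$, let $\Xb=\tfrac{c}{\sqrt n}\one_n\eb_1^T$ (first column $\tfrac{c}{\sqrt n}\one_n$, all others zero, so $\Xb$ has rank one with $\|\Xb\|_2=c$), and set $\beta=t\,\eb_1$ with $t>0$ to be chosen. Then $\db=\Xb\beta=\tfrac{ct}{\sqrt n}\one_n$ has all coordinates equal and positive (tightening Cauchy--Schwarz), $\beta=t\eb_1$ is precisely the right singular vector of $\Xb-\Xbtil$ (tightening the spectral step), and every $(\Xb\beta)_i=\tfrac{ct}{\sqrt n}\to+\infty$ as $t\to\infty$ (tightening the softplus step).

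It then remains to compute the ratio directly. Here $\Lcal(\beta;\Xbtil)=n\log 2$, $\Lcal(\beta;\Xb)=n\log(1+e^{ct/\sqrt n})$, and $\sqrt n\,\|\Xb-\Xbtil\|_2\,\|\beta\|_2=\sqrt n\,c\,t$. Using the elementary bound $\log(1+e^{x})\ge x$,
\begin{gather*}
\frac{|\Lcal(\beta;\Xb)-\Lcal(\beta;\Xbtil)|}{\sqrt n\,\|\Xb-\Xbtil\|_2\,\|\beta\|_2}
= \frac{n\bigl(\log(1+e^{ct/\sqrt n})-\log 2\bigr)}{\sqrt n\,c\,t}
\ge 1-\frac{\sqrt n\,\log 2}{c\,t},
\end{gather*}
so choosing $ct\ge \sqrt n\,\log 2/\delta$ (for instance $c=1$ and $t=\sqrt n\,\log 2/\delta$) yields the claimed factor $(1-\delta)$ and establishes the theorem.

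The main obstacle to anticipate is that two of the tightness conditions pull in opposite directions: Cauchy--Schwarz wants $\db$ to have coordinates of equal magnitude, whereas the softplus step needs $\sum_i\log(1+e^{(\Xb\beta)_i})$ not to lose a constant fraction to coordinates where $(\Xb\beta)_i$ is small or negative. If the signs of the $\db_i$ were mixed, the coordinatewise contributions could partially cancel and the loss gap would fall strictly short of $\|\db\|_1$. Forcing $\db$ to be a \emph{positive} multiple of $\one_n$ through a rank-one perturbation along $\one_n$ resolves this tension, since that one vector simultaneously equalizes $|\db_i|$ and makes every $(\Xb\beta)_i$ large and positive. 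Finally, the construction lives in $\R^{n\times d}$ for every $d\ge 1$, so it is in particular valid in the stated regime $d\ge n$; the role of the hypothesis $d\ge n$ is to leave room, in the extra columns, for up to $n-1$ orthonormal high-energy directions orthogonal to $\eb_1$, which turns $\Xbtil$ into an honest best low-rank approximation of $\Xb$ whose single dropped singular component is exactly $\tfrac{c}{\sqrt n}\one_n\eb_1^T$ (any regularization term would cancel in the difference regardless, since $\|\beta\|_2$ is identical for $\Xb$ and $\Xbtil$).
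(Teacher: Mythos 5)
Your proof is correct, and it reaches the bound by a genuinely different construction than the paper's. The paper takes $\Xb = x\,\Ib_n$, $\Xbtil = (x+s)\,\Ib_n$, $\beta = \one_n$ (so the perturbation is a scaled identity and $\beta$ is spread over all coordinates), computes the loss gap as $\lim_{x\to\infty}\sum_{i=1}^n\bigl[\log(1+e^{x+s})-\log(1+e^{x})\bigr] = sn$ via a limiting argument, and then pads with zero columns to reach $d \geq n$ — which is exactly why that hypothesis appears. You instead take a rank-one data matrix $\Xb = \tfrac{c}{\sqrt n}\one_n\eb_1^T$ compared against $\Xbtil = \zero$, with $\beta = t\,\eb_1$ aligned with the top right singular vector, and close the argument non-asymptotically with the elementary bound $\log(1+e^{x}) \geq x$, giving the explicit choice $ct \geq \sqrt n \log 2/\delta$. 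The underlying principle is the same in both proofs (force every margin $\xb_i^T\beta$ to be equal and large, so that the Lipschitz, Cauchy--Schwarz, and spectral-norm steps of Theorem~\ref{thm:logistic_loss_upper} are simultaneously tight), but your version buys two things: a quantitative, limit-free bound on how large the instance must be as a function of $\delta$, and a construction that lives in $\R^{n\times d}$ for every $d \geq 1$ without padding, showing the hypothesis $d \geq n$ is not actually needed. One cosmetic caveat: taking $\Xbtil = \zero$ means the "approximation" is the trivial rank-zero one, whereas the paper's $\Xbtil$ is a full-rank nearby matrix; since Theorem~\ref{thm:logistic_loss_upper} places no rank or quality constraint on $\Xbtil$, both choices are equally legitimate for establishing worst-case tightness.
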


\begin{proof}
To prove the theorems statement, we first consider the case of square matrices $(d = n)$.  In particular, first consider the case where $d = n = 1$, where $\Xb = [x]$ and $\Xbtil = [x + s]$, in which case $\|\Xb - \Xbtil\|_2 = s$.  Then,
\begin{gather*}
     \lim_{x \rightarrow \infty} \Lcal([1]; [x+s]) - \Lcal([1]; [x]) = \lim_{x \rightarrow \infty} \log(1+e^{x + s}) - \log(1 + e^x)  = s
\end{gather*}
Which shows that for $\beta = [1]$ and $x$ with large enough magnitude $\Lcal(\beta; \Xbtil) - \Lcal(\beta; \Xb) = (1-\delta)\|\Xb - \Xbtil\|_2$. Next, let $\Xb = x \cdot \Ib_n$, $\Xbtil = (x+s) \cdot \Ib_n$, and $\beta = \one_n$.  Then for all $i \in [n]$, $\xb_i^T\beta = x$ and $\xb_i^T\beta = x+s$.  Therefore, 
\begin{gather*}
     \lim_{x \rightarrow \infty} \Lcal(\beta; \Xbtil) - \Lcal(\beta; \Xb) = \lim_{x \rightarrow \infty} \sum_{i=1}^n \left[\log(1+e^{x + s}) - \log(1 + e^x)\right] = sn.
\end{gather*}
Since $\|\Xb - \Xbtil\|_2 = \|s \cdot \Ib\|_2 = s$.  $\|\beta\|_2 = \sqrt{n}$,
\begin{gather*}
    \lim_{x \rightarrow \infty} \Lcal(\beta; \Xbtil) - \Lcal(\beta; \Xb)
    = sn = \sqrt{n} \|\Xb - \Xbtil\|_2 \|\beta\|_2
\end{gather*}
Hence, we conclude the statement of the theorem for the case where $d = n$. To conclude the case for $d \geq n$, note that $\sqrt{n}\|\Xb - \Xbbar\|_2\|\beta\|_2$ does not change if we extend $\Xb$ and $\Xbbar$ with columns of zeroes and extend $\beta$ with entries of zero until $\Xb,\Xbbar \in \R^{n \times d}$ and $\R^d$.  This procedure also does not change the loss at $\beta$, hence we conclude the statement of the theorem. 
\end{proof}

\end{document}